\documentclass[11pt]{article}
\usepackage[utf8]{inputenc}
\pdfoutput=1
\usepackage{setspace}
\usepackage{palatino}
\usepackage{graphicx}
\usepackage{float}
\usepackage{titling} 
\usepackage{multirow}
\usepackage{lscape}
\usepackage{amsmath}
\usepackage{amssymb}
\usepackage[a4paper, total={6in, 9.5in}]{geometry}
\usepackage{array}
\usepackage[normalem]{ulem}
\fontfamily{ppl}\selectfont 
\usepackage{eqparbox}
\usepackage{arydshln}
\usepackage{mathrsfs}
\usepackage[american]{babel}
\usepackage{mathtools}

\usepackage{rotating}
\usepackage{amsthm}
\newtheorem{theorem}{Theorem}
\newtheorem{lemma}{Lemma}

\newtheorem{definition}{Definition}

\usepackage[ruled,vlined]{algorithm2e}

\usepackage{bm}
\usepackage{caption}
\usepackage{subcaption}

\captionsetup[table]{skip=10pt}

\usepackage[natbibapa]{apacite}
\bibliographystyle{apacite}

\PassOptionsToPackage{hyperindex,breaklinks}{hyperref}
\usepackage{hyperref}
\usepackage{xcolor}
\definecolor{darkblue}{rgb}{0, 0, 0.5}
\hypersetup{colorlinks=true,citecolor=darkblue, linkcolor=darkblue, urlcolor=darkblue}

\usepackage[textsize=small]{todonotes}

\onehalfspacing

\setlength{\droptitle}{-5em}


\title{Unmixing highly mixed grain size distribution data via maximum volume constrained end member analysis
}

\author{Qianqian Qi\\
   {\small\raggedright Hangzhou Dianzi University, China}\\
   \href{mailto:q.qi@hdu.edu.cn}{\texttt{q.qi@hdu.edu.cn}} 
\and Zhongming Chen\\
    {\small\raggedright Hangzhou Dianzi University, China}\\
\href{mailto:zmchen@hdu.edu.cn}{\texttt{zmchen@hdu.edu.cn}}
\and Peter G. M. van der Heijden\\
    {\small\raggedright Utrecht University, the Netherlands and University of Southampton, UK}\\
\href{mailto:p.g.m.vanderheijden@uu.nl}{\texttt{p.g.m.vanderheijden@uu.nl}}
    }

\predate{}
\postdate{}
\date{}
\date{\vspace{-5ex}}

\begin{document}
{\setstretch{.8}
\maketitle
\begin{abstract}

End member analysis
(EMA) unmixes grain size distribution (GSD) data into a mixture of end members (EMs), thus helping understand sediment provenance and depositional regimes and processes. In highly mixed data sets, however, many EMA algorithms
find EMs which are still a mixture of true EMs.  To overcome this, we propose maximum volume constrained EMA (MVC-EMA), which finds EMs as different as possible. We provide a uniqueness theorem and a quadratic programming algorithm for MVC-EMA. Experimental results show that MVC-EMA can effectively find true EMs in highly mixed data sets.

\noindent{Keywords: Nonnegative matrix analysis; Minimum volume; Identifiability; Sufficient scattered conditions.}\\ 

\end{abstract}
}



\section{Introduction}\label{S: int}

In sedimentary geology, unmixing grain size distribution (GSD) data into a mixture of end members (EMs), called end member analysis (EMA), can help understand sediment provenance and depositional
regimes and processes  \citep{renner1995construction, weltje1997end, paterson2015new, van2018genetically, Liu2023Universal, moskalewicz2024identification, Lin2025Using, RENNY2026113384}. Denote $\bm{P}$ as a data matrix of size $I\times J$. Each row of $\bm{P}$ represents an observed specimen of GSD data with nonnegative and sum-to-1 constraints: $\bm{P} \geq 0$ and $\bm{P}\bm{1} = \bm{1}$, where $\geq$ means that each elements in the matrix is nonnegative and $\bm{1}$ is the vector of all ones of appropriate dimension. EMA approximates a GSD data $\bm{P}$ by a lower rank matrix $\bm{\Pi}$ which is the product of two nonnegative matrices with row-sum-to-1 constraint. Thus, EMA can be expressed as
\begin{equation}\label{E: emamodnoiseless}
\begin{split}
     \text{min     } ~~~&\frac{1}{2}||\bm{P} - \bm{\Pi}||_F^2\\
    \text{subject to    }  ~~~& \bm{\Pi} = \bm{W}\bm{G},~~~\bm{W}\bm{1} = \bm{1},~~~  \bm{G}\bm{1} = \bm{1}\\
    & \bm{W}\in \Re^{I\times K}_+,~~~ \bm{G}\in \Re^{K\times J}_+
\end{split}    
\end{equation}
\noindent where $K \leq \text{min}\{I, J\}$ and the Frobenius norm of a matrix $\bm{A}$ is defined by $||\bm{A}||_F^2 = \sqrt{\sum_{i}\sum_{j}\bm{A}(i, j)^2}$. In addition to Frobenius norm, there are other objective function such as $L_1$ norm to measure the discrepancy between $\bm{P}$ and $\bm{\Pi}$ \citep{ZHANG2020106656}.
Each row of $\bm{G}$ is a end-member, and each specimen in $\bm{\Pi}$ is constructed by these end-members with weights or abundances in the corresponding row of $\bm{W}$. $\bm{W}$ and $\bm{G}$ are called the abundance matrix and the end-member matrix, respectively. Geometrically, the rows of $\bm{\Pi}$ are  in the convex hull generated by the rows of $\bm{G}$ \citep{avis1995good, gillis2020nonnegative}. Given an observed matrix $\bm{P}$, EMA aims to recover $\bm{W}$ and $\bm{G}$ that yield the lower rank matrix $\bm{\Pi}$.

EMA, however, is not unique \citep{weltje1997end, weltje2007genetically, renner1993resolution, renner1995construction, paterson2015new, ZHANG2020106656}. Given EMA $\bm{\Pi} = \bm{W}\bm{G}$, we have
\begin{equation*}
    \bm{\Pi} = \bm{W}\bm{U}\bm{U}^{-1}\bm{G},
\end{equation*}
where $\bm{U}$ is a full rank matrix of size $K\times K$. The sum-to-one conditions $\bm{W}\bm{U}\bm{1} = \bm{1}$ and $\bm{U}^{-1}\bm{G}\bm{1} = \bm{1}$ hold if and only if $\bm{U1} = \bm{1}$ \citep{de1990latent}. The nonnegativity constraints $\bm{W}\bm{U} \ge 0$ and $\bm{U}^{-1}\bm{G} \ge 0$ further restrict the admissible set of $\bm{U}$. However, even under these constraints, $\bm{U}$ may be more than permutation, implying that EMA admits multiple equivalent solutions of the form $(\bm{W}\bm{U}, \bm{U}^{-1}\bm{G})$. Geometrically, nonuniquenes can be  understood as that any convex hull enclosing all data points could be solutions of EMA.

Many EMA algorithms choose to identify the solution by using minimum EMs, i.e. EMs that enclose specimens as tightly as possible \citep{weltje1997end, weltje2007genetically, paterson2015new, van2018genetically, Dietze2022application}. However, this framework fails for highly mixed data where no single specimen is near a true end-member. This is because estimated EMs are themselves mixtures of the true EMs. Highly mixed data, however, are common in natural settings. Following \citet{paterson2015new}, we illustrate a highly mixed case using a simulated GSD data ($99 \times 100$). The data are derived from two lognormal EMs, with abundances always not less than $0.13$ (Figure~\ref{F: True End Members two sources},~\ref{F: True Abundances two sources}). Applying end member modelling algorithm (EMMA) \citep{weltje1997end, seidel2015r}, a fundamental minimum-EMs algorithm, produces two bimodal EMs (Figure~\ref{F: EMMAtwosourcemaxendmem}). Each EM is dominated by a source but is contaminated by the other, leading to abundance misestimation (Figure~\ref{F: EMMAtwosourcemaxabundances}).

\begin{figure}[h]
\centering
\begin{subfigure}[b]{0.45\linewidth}
\includegraphics[width=1\textwidth]{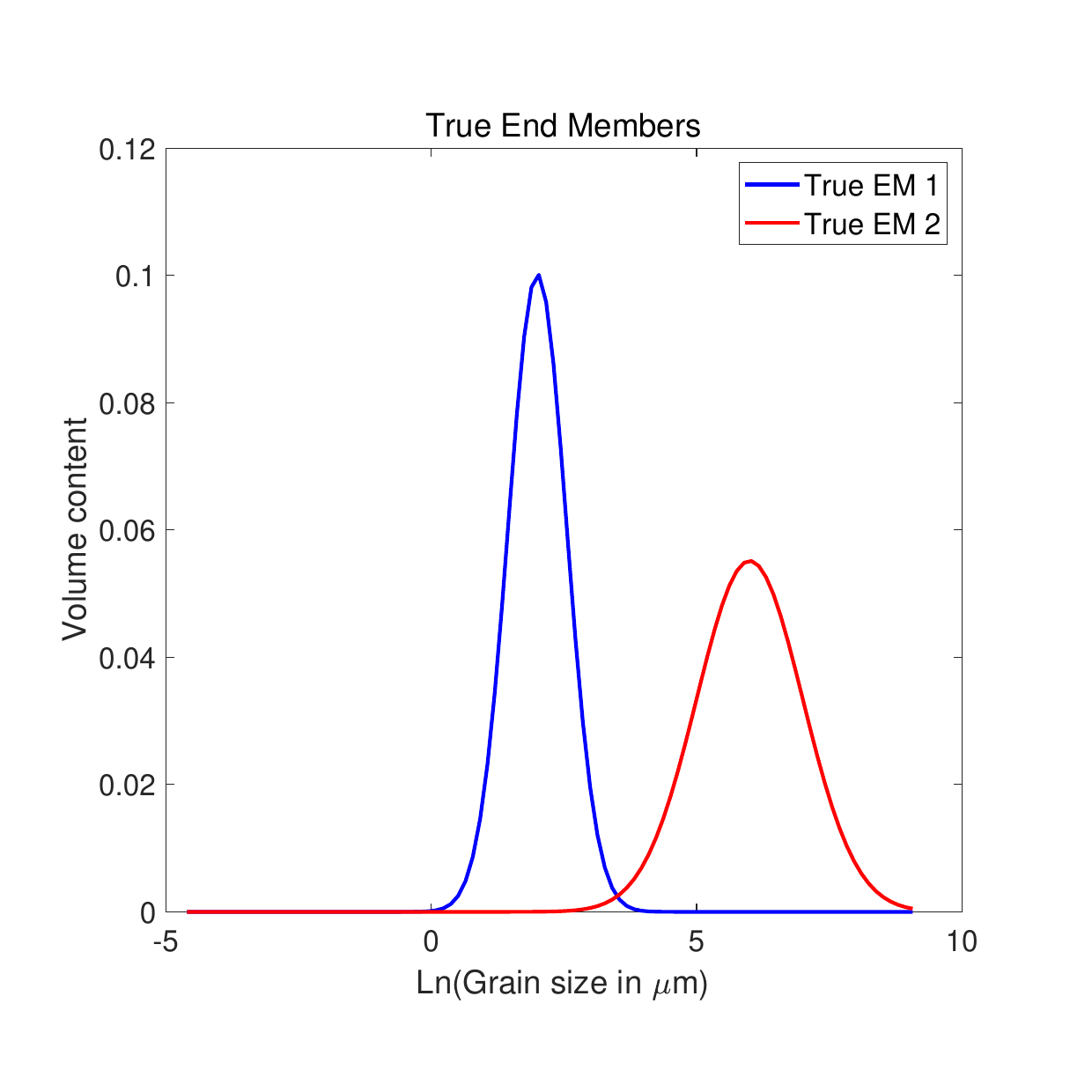}
 \caption{True EMs}\label{F: True End Members two sources}
 \end{subfigure}
 \begin{subfigure}[b]{0.45\linewidth}
\includegraphics[width=1\textwidth]{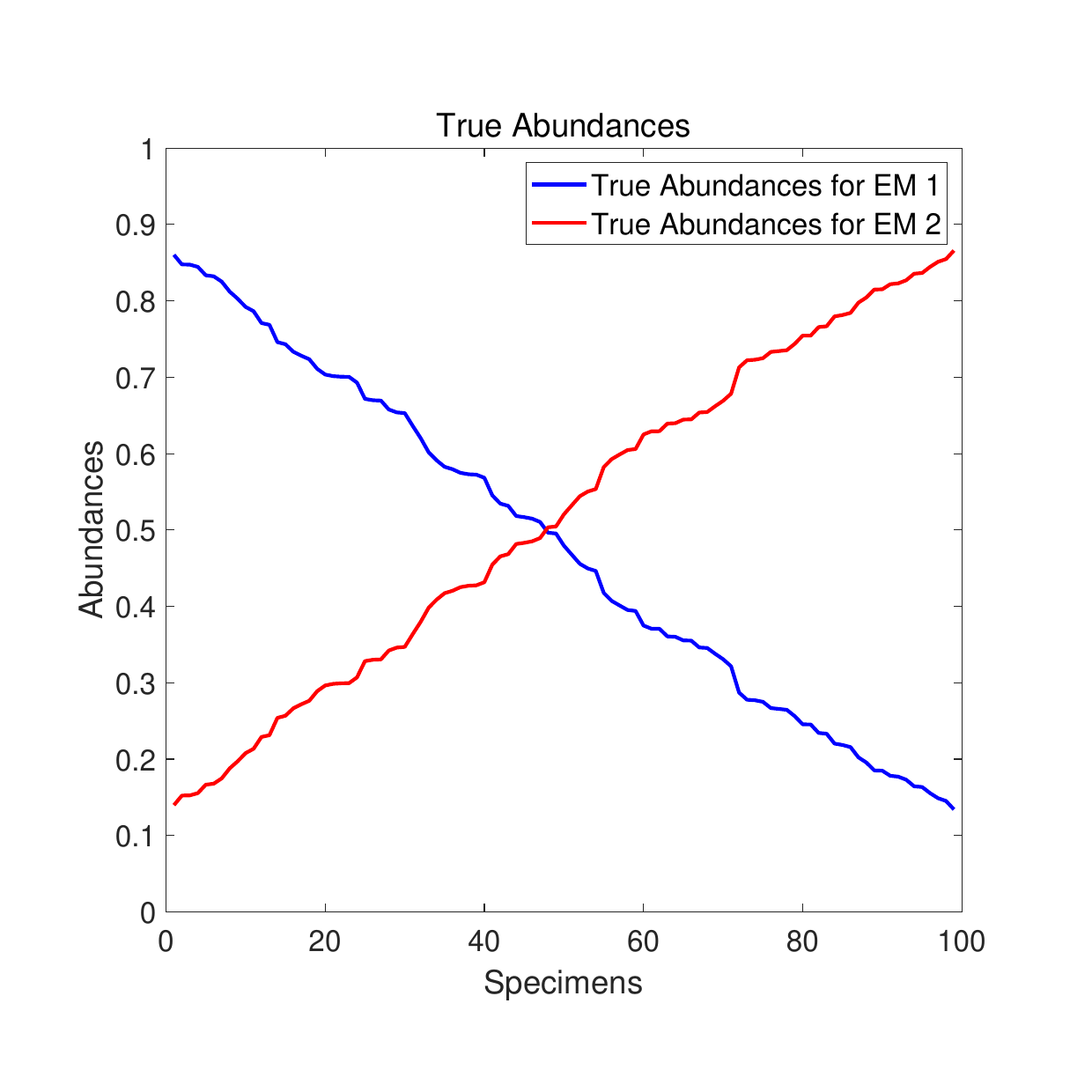}
 \caption{True Abundances}\label{F: True Abundances two sources}
 \end{subfigure}
 \\
   \begin{subfigure}[b]{0.45\linewidth}
\includegraphics[width=1\textwidth]{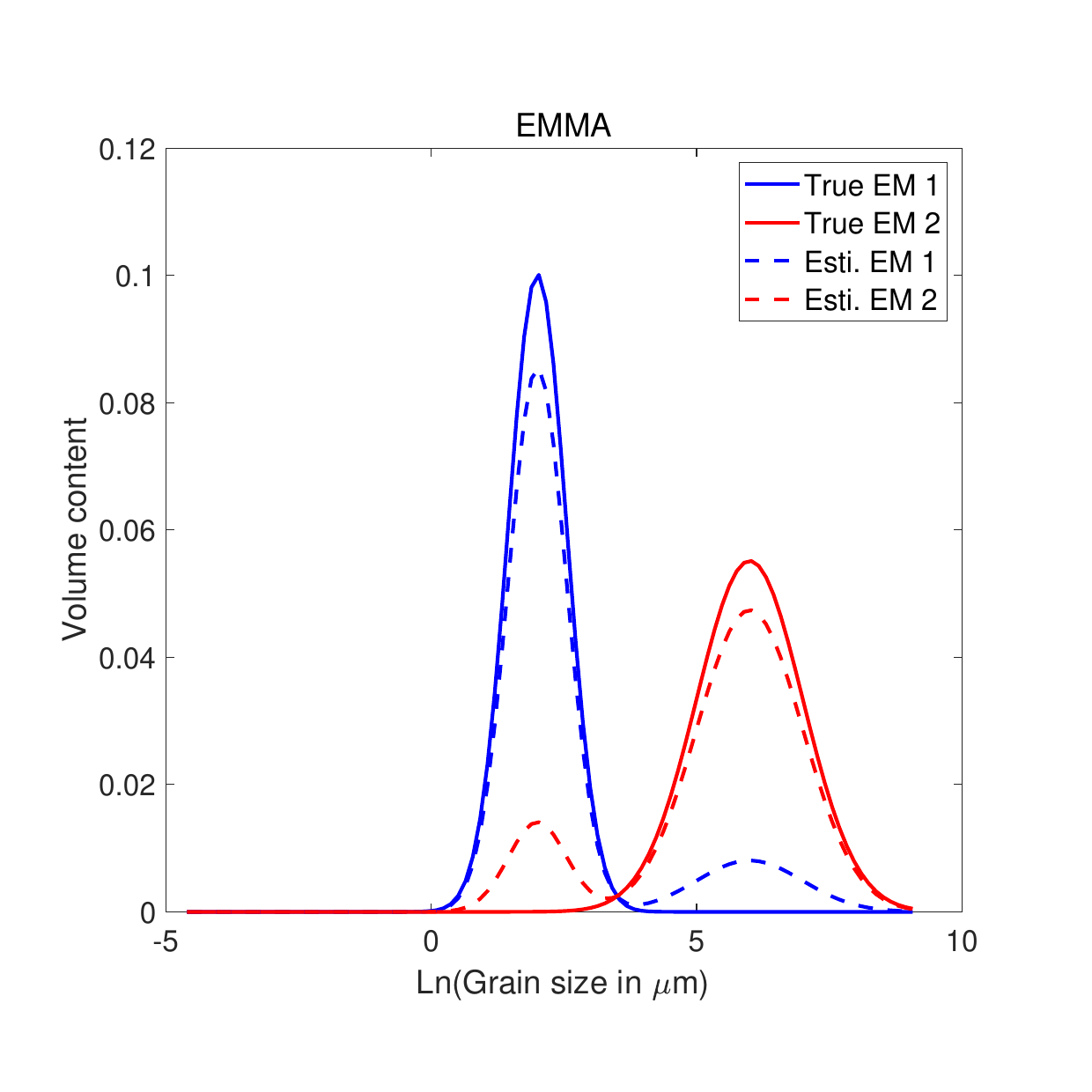}
    \caption{EMMA EMs}
    \label{F: EMMAtwosourcemaxendmem}
 \end{subfigure}
    \begin{subfigure}[b]{0.45\linewidth}
\includegraphics[width=1\textwidth]{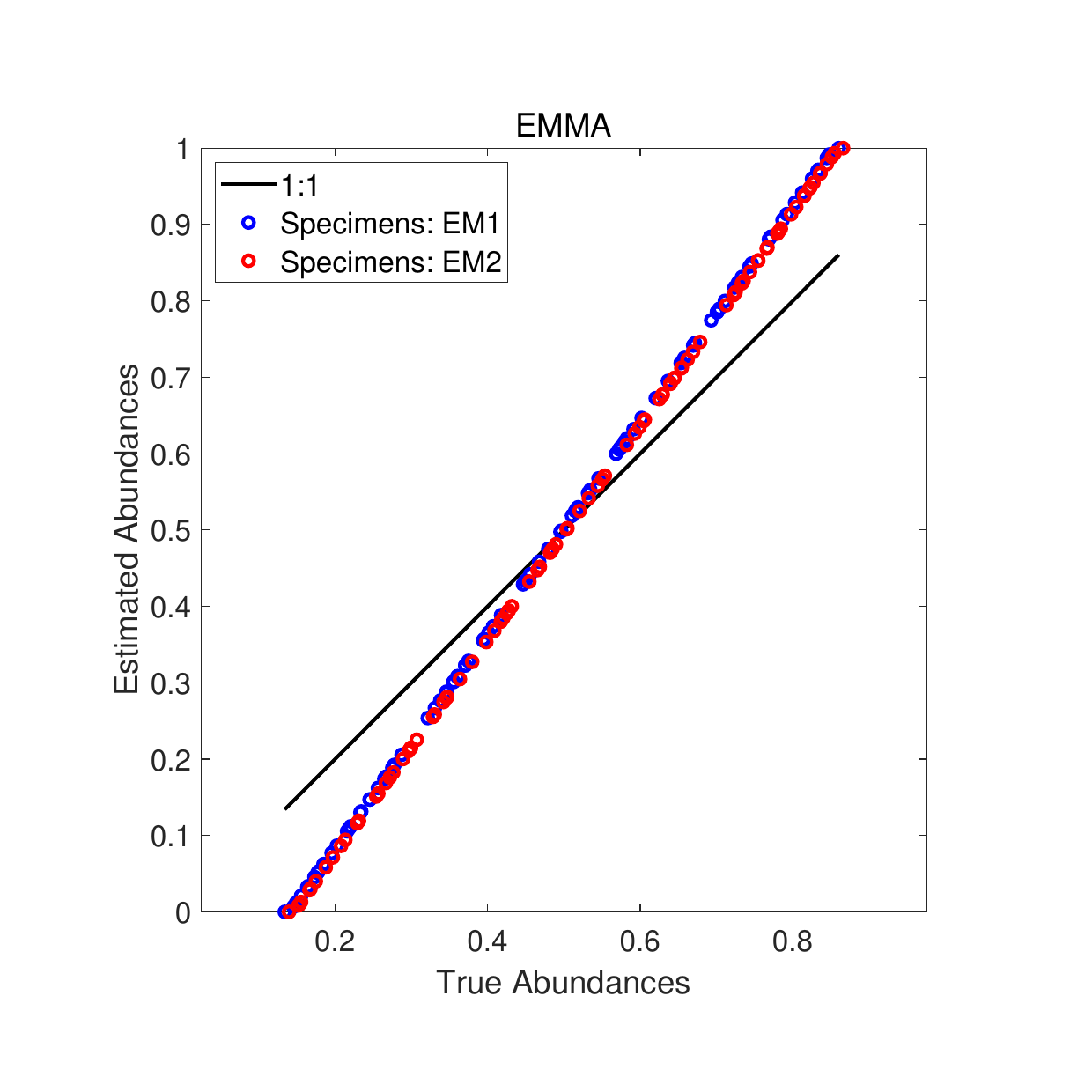}
    \caption{EMMA Abundances}
    \label{F: EMMAtwosourcemaxabundances}
 \end{subfigure}
 \caption{(a and b) Two lognormal EMs and their abundances; (c and d) EMMA unmixing results.}\label{F: emmabasemmatwosource}
    \end{figure}

The "outermost EMs" concept was proposed for highly mixed data by \citet{ZHANG2020106656}, analogous to the "outer extreme solution" in latent budget analysis (LBA) \citep{van1999identifiability}. Notably, LBA has the same parametric form $\bm{\Pi} = \bm{W}\bm{G}$ as EMA \citep{clogg1981latent, de1990latent, heijden1994end}. LBA was developed in the social sciences which can explain pattern of time allocation. However, the framework "outermost EMs" or "outer extreme solution" lacks uniqueness theorems. The only known uniqueness condition, from \citet{de1990latent}, is limited to dimensionality $K=2$, while $K>2$ regularly happens in practice. On the algorithmic side, to find outermost EMs, \citet{ZHANG2020106656} make use of a genetic algorithm. The genetic algorithm is an flexible and widely used algorithm, inspired by natural selection mechanisms. However, due to its stochastic and heuristic nature, it has no guarantee to obtain the global optima and even local optima. \citet{van1999identifiability} proposed Metropolis algorithm to seek the outer extreme solution \citep{RJ-2018-026}. Likewise, the Metropolis algorithm is heuristic in nature.

The "outermost EMs" or "outer extreme solution" is defined in terms of the maximum of the sum of distances between EMs such as the sum of Manhattan or of chi-squared distances \citep{van1999identifiability, ZHANG2020106656}. In contrast, the minimum volume assumption - closely related to "minimum EMs" - has been studied intensively in nonnegative matrix factorization (NMF), that decomposes a nonnegative matrix into the product of two nonnegative matrices, and that includes EMA as a special case \citep{Paatero1994Positive, lee1999learning, Fu2019Nonnegative, gillis2020nonnegative, hobolth2020unifying, GUO2024102379, SaberiMovahed2025Nonnegative, QP2025}.
NMF using the minimum volume assumption has been studied in considerable detail for uniqueness theorems and algorithms, which, partly, benefits from the fact that the volume measure is related to the determinant of $\bm{GG}^T$ rather than related to distances \citep{lin2015identifiability, fu2015, fu2018identifiability, leplat2020, hobolth2020unifying, Abdolali2024Dual, GUO2024102379, SaberiMovahed2025Nonnegative}. 

According to \citet{fu2016robust, Fu2019Nonnegative}, the sum of squared distances between all the pairs of basis vectors $\bm{G}$ is an approximation of the volume of basis matrix $\bm{G}$. In the simulation part of the paper by \citet{fu2016robust}, the regularizier related to the sum of squared distances between basis vectors is slightly worse than the regularizer related to the volume of basis matrix $\bm{G}$ in terms of the mean-squared-error (MSE) of abundance. However, to the best of knowledge, a volume regularizer has never been
used in EMA before.

Motivated by these insights, we propose maximum volume constrained end member analysis (MVC-EMA) for highly mixed data. Section~\ref{S: uniquenesstheorem} proposes a sufficient condition for MVC-EMA to be unqiue. Section~\ref{S: algorithms} proposes an algorithm, named APFGM short for alternative projected fast gradient methods. This algorithm can perform minimum, no, and maximum volume assumptions. Section~\ref{S: experiments} compares APFGM with maximum volume, APFGM with no volume,  APFGM with minimum volume, and  EMMA (a classic algorithm for "Minimum EMs" in sedimentary geology). Finally, Section~\ref{S: conclusion} concludes this paper.

\section{Uniqueness theorem under the maximum volume assumption}\label{S: uniquenesstheorem}

Before beginning this section, we introduce two lemmas in linear algebra, which are used later in this section \citep{gillis2020nonnegative}. The cone of a matrix $\bm{A}\in \Re^{I\times J}$ is defined by
\begin{equation*}
    \text{cone}(\bm{A}) = \left\{\sum_j r_j\bm{A}(:, j) \ |\  r_j \geq 0, \  j = 1, \cdots, J\right\}.
\end{equation*}

\begin{lemma}
    Given a matrix $\bm{A}$, the dual of $\text{cone}(\bm{A})$ is defined by $\text{cone}^*(\bm{A}) = \{\bm{y} | \bm{A}^T\bm{y} \geq 0\}$. 
\end{lemma}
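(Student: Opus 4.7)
The plan is to derive the explicit characterization $\text{cone}^*(\bm{A}) = \{\bm{y} \mid \bm{A}^T \bm{y} \geq 0\}$ from the abstract dual-cone notion $\text{cone}^*(\bm{A}) = \{\bm{y} \mid \langle \bm{x}, \bm{y}\rangle \geq 0 \text{ for all } \bm{x} \in \text{cone}(\bm{A})\}$, by exploiting the finite-generator structure supplied by the cone definition stated just above the lemma.

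First, I would unfold the generator representation: every $\bm{x} \in \text{cone}(\bm{A})$ has the form $\sum_{j=1}^{J} r_j \bm{A}(:, j)$ with $r_j \geq 0$. Substituting into the inner product gives $\langle \bm{x}, \bm{y}\rangle = \sum_j r_j\, \bm{A}(:,j)^T \bm{y} = \sum_j r_j (\bm{A}^T \bm{y})_j$. For the inclusion $\{\bm{y} \mid \bm{A}^T \bm{y} \geq 0\} \subseteq \text{cone}^*(\bm{A})$, componentwise nonnegativity of $\bm{A}^T \bm{y}$ combined with $r_j \geq 0$ immediately forces this sum to be nonnegative for every conic combination, placing $\bm{y}$ in the dual.

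For the reverse inclusion I would specialize $\bm{x}$ to the individual generators: setting $r_j = 1$ and all other $r_i = 0$ produces $\bm{x} = \bm{A}(:,j) \in \text{cone}(\bm{A})$, and the dual condition then yields $(\bm{A}^T \bm{y})_j = \langle \bm{A}(:,j), \bm{y}\rangle \geq 0$. Running this column by column for $j = 1, \dots, J$ gives $\bm{A}^T \bm{y} \geq 0$ componentwise, completing the equivalence.

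There is no substantive obstacle here; the result is essentially an unpacking of definitions. The only subtlety worth flagging is why it suffices to test the dual inequality against the generators rather than against every element of the cone, and this reduction follows immediately from the linearity of the inner product in $\bm{x}$ combined with the nonnegativity of the coefficients $r_j$ in any conic combination.
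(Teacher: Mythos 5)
Your proof is correct and is exactly the standard argument: the paper itself offers no proof of this lemma, stating it as a definition-cum-known-fact imported from the NMF literature (Gillis, 2020), so your unpacking of the abstract dual cone $\{\bm{y} \mid \langle \bm{x}, \bm{y}\rangle \geq 0 \ \forall \bm{x} \in \text{cone}(\bm{A})\}$ into the generator test $\bm{A}^T\bm{y} \geq 0$ is precisely the verification the paper leaves implicit. Both directions are handled correctly, and your closing remark about why testing against generators suffices (linearity plus nonnegativity of the $r_j$) addresses the only point of substance.
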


\begin{lemma}
    Given two matrices $\bm{A}$ and $\bm{B}$, if $\text{cone}(\bm{A}) \subseteq \text{cone}(\bm{B})$, then $\text{cone}^*(\bm{B}) \subseteq \text{cone}^*(\bm{A})$.
\end{lemma}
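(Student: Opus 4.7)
The plan is to prove this directly from the definitions of cone and dual cone, using the order-reversing nature of polar duality. The essential observation is that every column of $\bm{A}$ lies in $\mathrm{cone}(\bm{A})$, hence by hypothesis it lies in $\mathrm{cone}(\bm{B})$, so each column of $\bm{A}$ admits a nonnegative combination representation in the columns of $\bm{B}$.

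First I would fix an arbitrary $\bm{y}\in\mathrm{cone}^*(\bm{B})$, which by Lemma~1 means $\bm{B}^T\bm{y}\geq 0$, i.e. $\bm{y}^T\bm{B}(:,k)\geq 0$ for every column index $k$. The goal is to show $\bm{y}\in\mathrm{cone}^*(\bm{A})$, namely $\bm{y}^T\bm{A}(:,j)\geq 0$ for every column index $j$ of $\bm{A}$.

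Next, for each column $\bm{A}(:,j)$, note that $\bm{A}(:,j)=\sum_k \delta_{kj}\bm{A}(:,k)$ with $\delta_{kj}\in\{0,1\}$, so $\bm{A}(:,j)\in\mathrm{cone}(\bm{A})$. The hypothesis $\mathrm{cone}(\bm{A})\subseteq\mathrm{cone}(\bm{B})$ then yields nonnegative coefficients $c_{kj}\geq 0$ with
\begin{equation*}
\bm{A}(:,j)=\sum_k c_{kj}\,\bm{B}(:,k).
\end{equation*}
Taking the inner product with $\bm{y}$ gives
\begin{equation*}
\bm{y}^T\bm{A}(:,j)=\sum_k c_{kj}\,\bm{y}^T\bm{B}(:,k)\geq 0,
\end{equation*}
since each $c_{kj}\geq 0$ and each $\bm{y}^T\bm{B}(:,k)\geq 0$. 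Collecting over all $j$ gives $\bm{A}^T\bm{y}\geq 0$, i.e. $\bm{y}\in\mathrm{cone}^*(\bm{A})$, completing the inclusion.

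There is no real obstacle here: the proof is essentially an unfolding of definitions combined with the preservation of nonnegativity under nonnegative combinations. The only thing to be slightly careful about is making the representation $\bm{A}(:,j)=\sum_k c_{kj}\bm{B}(:,k)$ explicit, to justify that the inclusion $\mathrm{cone}(\bm{A})\subseteq\mathrm{cone}(\bm{B})$ really supplies \emph{nonnegative} coefficients and not merely some linear combination.
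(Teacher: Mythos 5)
Your proof is correct: the paper states this lemma without proof (citing it as a standard fact from the literature on nonnegative matrix factorization), and your argument is exactly the standard one it implicitly relies on. Unfolding the definitions, writing each column of $\bm{A}$ as a nonnegative combination of columns of $\bm{B}$, and using preservation of nonnegativity under nonnegative combinations fills the gap cleanly; there is nothing to correct.
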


\begin{definition}
    The EMA solution ($\bm{W}$, $\bm{G}$) of $\bm{\Pi}$ is said to be essentially unique if and only if any other EMA solution ($\tilde{\bm{W}}$, $\tilde{\bm{G}}$) has the form
    \begin{equation*}
        \tilde{\bm{W}} = \bm{W}\bm{\Gamma}^{-1} \text{ and } \tilde{\bm{G}} = \bm{\Gamma}\bm{G}
    \end{equation*}
    where $\bm{\Gamma}$ is a permutation matrix.
\end{definition}

Given the lower rank matrix $\bm{\Pi} \in \Re_{+}^{I \times J}$ with $\bm{\Pi}\bm{1} = \bm{1}$, one wants to find an essentially unique solution of EMA $\bm{\Pi} = \bm{WG}$ under the assumption of maximum volume. Formally, the maximum volume constrained end member analysis (MVC-EMA) is formulated as follows:
    \begin{equation}\label{E: maxvolcriteria}
\begin{split}
    \text{max}\quad & \text{det}(\bm{G}\bm{G}^T)\\
    \text{subject to} \quad  & \bm{\Pi} = \bm{W}\bm{G},~~~ \bm{W}\bm{1} = \bm{1},~~~  \bm{G}\bm{1} = \bm{1}\\
    & \bm{W}\in \Re^{I\times K}_+, ~~~ \bm{G}\in \Re^{K\times J}_+  
\end{split}    
\end{equation}
where $K \leq \text{min}\{I, J\}$ and $\text{det}(\bm{G}\bm{G}^T)$ refers to the determinant of $\bm{G}\bm{G}^T$. 

A sufficiently scattered condition (SSC) is a sufficient condition for minimum volume constrained nonnegative matrix factorization (NMF) to be unique \citep{fu2015, fu2018identifiability, leplat2020}. In minimum volume constrained NMF, SSC is used in the context of the coefficient matrix/abundance matrix. Here, we introduce SSC to MVC-EMA. In contrast to minimum volume constrained NMF, SSC is used in the context of the basis matrix/end-member matrix in MVC-EMA. SSC is related to sparsity. This implies that in MVC-EMA the end-member matrix $\bm{G}$, instead of abundance matrix $\bm{W}$, tends to be sparse. 

To derive the uniqueness of the solution of MVC-EMA (\ref{E: maxvolcriteria}), we need the following assumptions as in \citet{fu2015, fu2018identifiability, leplat2020}. 

{\bf Assumption A1}: The matrices $\bm{W}$ and $\bm{G}$ satisfy $\text{rank}(\bm{W}) = \text{rank}(\bm{G}) = K$. 

{\bf Assumption A2}: $\bm{G} \in \Re_+^{K \times J}$ is satisfied with sufficiently scattered conditions (SSC): 

(1) SSC1: $\mathbb{C} \subseteq \text{cone}(\bm{G})$, where $\mathbb{C} = \{\bm{x} \in \Re_{+}^K | \bm{1}^T\bm{x} \geq \sqrt{K - 1}||\bm{x}||_2\}$ is second-order cone;

(2) SSC2: $\text{cone}^*(\bm{G}) \cap bd\mathbb{C}^* = \{\alpha\bm{e}_k | \alpha \geq 0, k = 1, \cdots, K\}$, where the dual of $\mathbb{C}$ is defined as $\mathbb{C}^* = \{\bm{x} \in \Re^K | \bm{1}^T\bm{x} \geq ||\bm{x}||_2\}$.

We show that Assumptions A1 and A2 are a sufficient condition for the uniqueness of the solution of MVC-EMA.
\begin{theorem}\label{theorem: maxvol}
Under Assumptions A1 and A2, MVC-EMA uniquely identifies $\tilde{\bm{W}}$ and $\tilde{\bm{G}}$ up to permutation, i.e., any optimal solution $\bm{W}_{\#}$ and $\bm{G}_{\#}$ to MVC-EMA (\ref{E: maxvolcriteria}) takes the form:
\begin{equation}
    \bm{W}_{\#} = \tilde{\bm{W}}\bm{U}^{-1} \text{ and } \bm{G}_{\#} = \bm{U}\tilde{\bm{G}}
\end{equation}
\noindent where $\bm{U}$ is a permutation matrix.
\end{theorem}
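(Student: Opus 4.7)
The plan is to reparameterize any alternative factorization through an invertible $K\times K$ matrix $\bm{Q}$, reduce the MVC-EMA objective to a multiple of $|\det(\bm{Q})|^{2}$, and then exploit Assumption~A2 together with Hadamard's inequality to force $|\det(\bm{Q})|\le 1$ with equality attainable only at permutation matrices.

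First I would set up the reparameterization. By Assumption~A1 both $\tilde{\bm{W}}$ and $\tilde{\bm{G}}$ have rank~$K$, so any feasible optimum $(\bm{W}_{\#},\bm{G}_{\#})$ of MVC-EMA admits an invertible $\bm{Q}\in\Re^{K\times K}$ with $\bm{W}_{\#}=\tilde{\bm{W}}\bm{Q}^{-1}$ and $\bm{G}_{\#}=\bm{Q}\tilde{\bm{G}}$. The sum-to-one constraints then force $\bm{Q}\bm{1}=\bm{1}$ (as noted in the paper's introduction), and $\bm{G}_{\#}\ge 0$ requires each row $\bm{q}_{k}$ of $\bm{Q}$ (viewed as a column vector) to satisfy $\tilde{\bm{G}}^{T}\bm{q}_{k}\ge 0$, i.e.\ $\bm{q}_{k}\in\mathrm{cone}^{*}(\tilde{\bm{G}})$. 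Because
\[
\det(\bm{G}_{\#}\bm{G}_{\#}^{T}) \;=\; \det(\bm{Q})^{2}\det(\tilde{\bm{G}}\tilde{\bm{G}}^{T}),
\]
maximizing the MVC-EMA objective is equivalent to maximizing $|\det(\bm{Q})|$ over admissible $\bm{Q}$.

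Next I would invoke Assumption~A2. SSC1 together with Lemma~2 yields $\mathrm{cone}^{*}(\tilde{\bm{G}})\subseteq\mathbb{C}^{*}$, so every row satisfies $\bm{1}^{T}\bm{q}_{k}\ge \|\bm{q}_{k}\|_{2}$. Combined with the normalization $\bm{1}^{T}\bm{q}_{k}=1$ this gives $\|\bm{q}_{k}\|_{2}\le 1$ for each $k$, and Hadamard's inequality then forces
\[
|\det(\bm{Q})| \;\le\; \prod_{k=1}^{K}\|\bm{q}_{k}\|_{2} \;\le\; 1.
\]
The trivial choice $\bm{Q}=\bm{I}$ attains this bound, so the MVC-EMA optimum equals $\det(\tilde{\bm{G}}\tilde{\bm{G}}^{T})$ and any optimizer must saturate both inequalities, giving $\|\bm{q}_{k}\|_{2}=1$ for every $k$.

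Finally I would extract the permutation structure from the equality case. Saturation of the $\mathbb{C}^{*}$ bound means $\bm{1}^{T}\bm{q}_{k}=\|\bm{q}_{k}\|_{2}$, placing each row on $bd\,\mathbb{C}^{*}$; hence $\bm{q}_{k}\in\mathrm{cone}^{*}(\tilde{\bm{G}})\cap bd\,\mathbb{C}^{*}$. SSC2 collapses this intersection to nonnegative multiples of the standard basis vectors, and $\bm{1}^{T}\bm{q}_{k}=1$ pins the multiplier to one, so $\bm{q}_{k}=\bm{e}_{\sigma(k)}$. Invertibility of $\bm{Q}$ then forces $\sigma$ to be a bijection, so $\bm{Q}$ is a permutation matrix. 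The main obstacle I anticipate is the equality analysis in the third step: I need the geometric observation that Hadamard saturation pushes each row precisely to the boundary of $\mathbb{C}^{*}$ where SSC2 can activate, and one must also verify that the orthogonality component of Hadamard equality is consistent with (indeed forced by) the coordinate-vector conclusion. Once this geometric alignment is in hand, the remainder is linear-algebraic bookkeeping.
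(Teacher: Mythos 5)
Your proposal is correct and follows essentially the same route as the paper's proof: reparameterize via an invertible row-stochastic matrix, use SSC1 with the dual-cone lemma plus Hadamard's inequality to bound $|\det(\bm{Q})|\le 1$, and use SSC2 on the equality case to force each row onto a standard basis vector. The one point you flag as an obstacle—reconciling the orthogonality part of Hadamard's equality condition—is not actually needed, since equality in the termwise product $\prod_k\|\bm{q}_k\|_2\le\prod_k\bm{1}^T\bm{q}_k$ already places every row on $bd\,\mathbb{C}^*$, which is all SSC2 requires.
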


\begin{proof}
Step 1: Let us consider both ($\tilde{\bm{W}}, \tilde{\bm{G}}$) and ($\bm{W}_{\#}$, $\bm{G}_{\#}$) to be optimal solutions for (\ref{E: maxvolcriteria}). Since $\text{rank}(\tilde{\bm{W}}) = \text{rank}(\tilde{\bm{G}}) = \text{rank}(\bm{W}_{\#}) = \text{rank}(\bm{G}_{\#}) = K$, there exists a full rank matrix $\bm{U}$ of size $K\times K$ such that 

\begin{equation*}
    \bm{W}_{\#} = \tilde{\bm{W}}\bm{U}^{-1} \text{ and } \bm{G}_{\#} = \bm{U}\tilde{\bm{G}}.
\end{equation*}
The matrix $\bm{U}$ has row-sum-to-1 constraint because
\begin{equation}\label{eq: sumto1}
    \bm{1} = \bm{G}_{\#}\bm{1} = \bm{U}\tilde{\bm{G}}\bm{1} = \bm{U}\bm{1}
\end{equation}

Step 2: From $\bm{G}_{\#} = \bm{U}\tilde{\bm{G}}$, the vectors in rows of $\bm{U}$ belong to dual cone of $\tilde{\bm{G}}$, i.e., $\bm{U}(i,:)^T \in \text{cone}^*(\tilde{\bm{G}})$ for $i = 1, \cdots, I$ where $\bm{U}(i,:)$ is a row vector with elements being the $i$th row of $\bm{U}$. According to SSC1 ($\mathbb{C} \subseteq \text{cone}(\tilde{\bm{G}})$), we have $\text{cone}^*(\tilde{\bm{G}}) \subseteq \mathbb{C}^*$. Thus, $\bm{U}(i, :)^T \in \mathbb{C}^*$. This means
\begin{equation}\label{eq: dualc}
    ||\bm{U}(i, :)^T||_2 \leq \bm{1}^T\bm{U}(i, :)^T
\end{equation}
Therefore,
\begin{equation*}
    |\text{det}(\bm{U})| = |\text{det}(\bm{U}^T)| \leq \prod_i||\bm{U}(i, :)^T||_2 \leq \prod_i \bm{1}^T\bm{U}(i,:)^T = 1.
\end{equation*}
where the first inequality is the Hadamard's inequality, the second follows (\ref{eq: dualc}), the last equality follows (\ref{eq: sumto1}).

Step 3:  If $|\text{det}(\bm{U})| < 1$, then 
\begin{equation*}
\text{det}(\bm{G}_{\#}\bm{G}_{\#}^T) = \text{det}(\bm{U}\tilde{\bm{G}}\tilde{\bm{G}}^T\bm{U}^T) = |\text{det}(\bm{U})|^2\text{det}(\tilde{\bm{G}}\tilde{\bm{G}}^T) < \text{det}(\tilde{\bm{G}}\tilde{\bm{G}}^T).
\end{equation*}
This means that $\bm{G}_{\#}$ does not have the maximum volume, which is contradict with that $(\bm{W}_{\#}, \bm{G}_{\#})$ is an optimal solution.

Step 4: If $|\text{det}(\bm{U})| = 1$, then all inequalities needs to be equalities. Hence, for all $i$,   $||\bm{U}(i, :)^T||_2 = \bm{1}^T\bm{U}(i, :)^T$, implying that $\bm{U}(i, :)^T$ is the boundary of $\mathbb{C}^*$. And $\bm{U}(i, :)^T \in \text{cone}^*(\tilde{\bm{G}})$. According to SSC2, we have $\bm{U}(i, :)^T = \alpha \bm{e}_k$.

Step 5: And due to the constraint $\bm{U}\bm{1} = \bm{1}$ in Equation~(\ref{eq: sumto1}), we have $\alpha = 1$. Thus $\bm{U}$ can only be a permutation matrix.
\end{proof}

It is worth noting that the maximum volume assumption and the proof process of Theorem~\ref{theorem: maxvol} are similar to the minimum volume assumption and the uniqueness theorem in minimum volume constrained NMF \citep{fu2015, fu2018identifiability, leplat2020}, but the maximum volume assumption and Theorem~\ref{theorem: maxvol} are novel, provide a new perspective, specifically for highly mixed data. Note that the SSC constraint holds for the basis matrix $\bm{G}$ instead of coefficient matrix $\bm{W}$.

We provide an illustration for SSC using  Figure~\ref{F: halfpurity} for $K = 3$
\citep{gillis2020nonnegative, QP2025}. Figure~\ref{F: halfpurity} assumes that viewer stands in the nonnegative orthant, faces the origin, and looks at the two-dimensional plane $\bm{x1} = \bm{1}$ \citep{gillis2020nonnegative}. Specifically, the dots "o" correspond to columns of $\bm{G}$; the crosses "X" are standard basis vectors $\bm{e}_1, \bm{e}_2$, $\bm{e}_3$, the circle corresponds to the second-order cone $\mathbb{C}$; the triangle corresponds to nonnegative orthant $\text{cone}(\bm{e}_1, \bm{e}_2, \bm{e}_3)$; the polygon is $\text{cone}(\bm{G})$. The circle is contained in the polygon. I.e., SSC1 holds.

Geometrically, SSC2 means that the only orthogonal matrix $\bm{Q}$ such that $\text{cone}(\bm{G}) \subseteq \text{cone}(\bm{Q})$ is a permutation matrix \citep{gillis2020nonnegative}. The cone
generated by the columns of any permutation matrix is exactly the triangle in the figure. Any orthogonal matrix is a rotated version of the triangle in the figure. As shown in the figure, the rotated version of the nonnegative orthant $\text{cone}(\bm{e}_1, \bm{e}_2, \bm{e}_3)$ (except for itself) does not contain the polygon formed by the
dots. Thus, SSC2 hold.

\begin{figure}[h]
\centering
\includegraphics[width=0.55\textwidth]{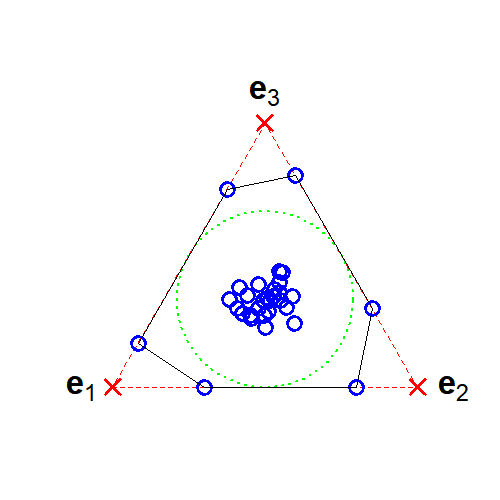}
\caption{Geometric illustrations that end-member matrix $\bm{G}$ satisfies SSC.}\label{F: halfpurity}
\end{figure}

In Figure~\ref{F: halfpurity}, some column points in $\bm{G}$ are located on the edge of the triangle, implying there is a zero element in these columns. Thus, the end-member matrix $\bm{G}$ is sparse. In contrast, in minimum volume constrained NMF, the abundance matrix $\bm{W}$ is satisfied with SSC, and thus the abundance matrix $\bm{W}$ is sparse.

Next, we provide a quadratic programming algorithm for MVC-EMA, inspired by the NMF literature \citep{Zhou2012Minimum, Leplat2019rank, GillisSNPA2014, gillis2020nonnegative}.

\section{APFGM algorithm}\label{S: algorithms}

In this paper, we consider MVC-EMA in which the observed data matrix $\bm{P}$ is approximated by $\bm{\Pi} = \bm{WG}$. The approximation error, quantified by the squared Frobenius norm $||\bm{P} - \bm{WG}||_F^2$, serves as the data fitting term, while $\text{det}(\bm{GG}^T)$ acts as a volume regularizer. The resulting objective function for MVC-EMA is given by
\begin{equation}\label{E: emamodnoise}
\begin{split}
     \text{min}_{\bm{W}, \bm{G}} ~~~& J(\bm{W}, \bm{G})  = \frac{1}{2}||\bm{P} - \bm{WG}||_F^2 - \frac{\lambda}{2} \text{det}(\bm{GG}^T)\\
     \text{subject to } ~~~&\bm{W}\bm{1} = \bm{1},~~~  \bm{G}\bm{1} = \bm{1}, ~~~\bm{W}\in \Re^{I\times K}_+, ~~~ \bm{G}\in \Re^{K\times J}_+.
\end{split}
\end{equation}
\noindent The regularization parameter $\lambda \geq 0$ controls the tradeoff between the data fitting term $||\bm{P} - \bm{WG}||_F^2$ and the volume regularizer $\text{det}(\bm{GG}^T)$. Because of the negative sign before $\lambda$, increasing $\lambda$ results in an increase in the volume. This negative sign is a crucial difference between MVC-EMA and minimum volume constrained NMF, where the sign before $\lambda$ is positive \citep{Zhou2012Minimum}. 

For (\ref{E: emamodnoise}), simultaneously optimizing $\bm{W}$ and $\bm{G}$ is a non-convex problem. As in most work in NMF, we minimize the objective function alternatively over $\bm{W}$ or $\bm{G}$, each time optimizing over one matrix while keeping the
other one fixed. The iteration scheme of alternative optimization can be written as:
\begin{subequations}\label{Eq: l1normalter}
    \begin{align}
       &\bm{W} = \text{arg min}_{\bm{W}: \bm{W}\bm{1} = \bm{1}, \bm{W} \geq 0} J(\bm{W}, \bm{G}) \label{Eq: wl1normalter}\\
        &\bm{G} = \text{arg min}_{\bm{G}: \bm{G}\bm{1} = \bm{1}, \bm{G} \geq 0} J(\bm{W}, \bm{G})
        \label{Eq: gl1normalter}
    \end{align}
\end{subequations}

\subsection{Updating the abundance matrix W}

In Equation~(\ref{Eq: wl1normalter}) $\bm{G}$ is considered known. Given the known $\bm{G}$, the problem of Equation~(\ref{Eq: wl1normalter}) becomes $\bm{W} = \text{arg min}_{\bm{W}: \bm{W}\bm{1} = \bm{1}, \bm{W} \geq 0} \frac{1}{2}||\bm{P} - \bm{WG}||_F^2$ and is therefore a convex optimization problem \citep{gillis2020nonnegative}. Solving $\text{min}_{\bm{W}: \bm{W}\bm{1} = \bm{1}, \bm{W} \geq 0} \frac{1}{2}||\bm{P} - \bm{WG}||_F^2$ is equivalent to solving subproblem
\begin{equation}
    \begin{split}
      \text{min}_{\bm{W}(i, :)}~~~& \frac{1}{2}||\bm{P}(i, :) - \bm{W}(i, :)\bm{G}||_F^2
        \\
        \text{subject to } ~~~& \bm{W}(i, :)\bm{1} = \bm{1}, \bm{W}(i, :) \geq 0
    \end{split}
\end{equation}
\noindent for $i = 1, \cdots, I$, where $\bm{P}(i, :)$ or $\bm{W}(i, :)$ is a row vector with elements being the $i$th row of $\bm{P}$ or $\bm{W}$. For each $i$, it is equivalent to solving
\begin{equation}\label{eq: eachw}
    \begin{split}
      \text{min}_{\bm{W}(i, :)}~~~& \bm{W}(i, :)\left(\frac{1}{2}\bm{GG}^T\right)\bm{W}(i, :)^T - \bm{P}(i, :)\bm{G}^T\bm{W}(i, :)^T
        \\
        \text{subject to } ~~~& \bm{W}(i, :)\bm{1} = \bm{1}, \bm{W}(i, :) \geq 0
    \end{split}
\end{equation}
which is a quadratic programming problem with nonnegative and row-sum-to-one constraints. In practice, the $I$ subproblems can be solved in parallel. If the rank of $\bm{G}$ is $K$, $\frac{1}{2}\bm{GG}^T$ is positive definite. Thus, problem (\ref{eq: eachw}) can be solved by quadratic programming algorithm and the solution is unique \citep{Zhou2012Minimum}. Here, for updating $\bm{W}(i, :)$, following \citet{GillisSNPA2014, Leplat2019rank}, we use a projected fast gradient method (PFGM) \citep{nesterov2004introductory}. 

\subsection{Updating the end-member matrix G}

In Equation~(\ref{Eq: gl1normalter}) $\bm{W}$ is a known matrix. As in updating abundance matrix, we express the objective function of Equation~(\ref{Eq: gl1normalter}) as $K$ subproblems where each one is a quadratic programming problem. 

Data fitting term $||\bm{P} - \bm{W}\bm{G}||_F^2$ can be expressed as \citep{Zhou2012Minimum}
\begin{equation}\label{E: gdatafitting}
\begin{split}
       ||\bm{P} - \bm{W}\bm{G}||_F^2 &= ||\bm{P} - \sum_{l = 1}^K\bm{W}(:, l)\bm{G}(l, :)||_F^2
       \\& = ||\bm{P}  - \sum_{l \neq k}\bm{W}(:, l)\bm{G}(l, :) - \bm{W}(:, k)\bm{G}(k, :)||_F^2
      \\&  = ||\bm{P}_k - \bm{W}(:, k)\bm{G}(k, :)||_F^2 \\&
        = ||\bm{P}_k||_F^2 +||\bm{W}(:, k)|| _F^2||\bm{G}(k, :)|| _F^2 - 2\bm{G}(k, :)\bm{P}_k^T\bm{W}(:, k)
\end{split}
\end{equation}
\noindent where $\bm{W}(:, k)$ is a column vector with elements being the $k$th column of $\bm{W}$ and $\bm{G}(k, :)$ is a row vector with elements being the $k$th row of $\bm{G}$, and $\bm{P}_k = \bm{P}  - \sum_{l \neq k}\bm{W}(:, l)\bm{G}(l, :)$. 

Let $\bar{\bm{G}_k}$ be the submatrix of $\bm{G}$ by removing the $k$th
row of $\bm{G}$. Then $\bm{G} = \bm{\Gamma}\begin{bmatrix}
\bm{G}(k, :) \\
\bar{\bm{G}_k}
\end{bmatrix}$ where $\bm{\Gamma}$ is a permutation matrix. Thus we have
\begin{equation}\label{E: detfull}
\begin{split}
   \text{det}(\bm{G}\bm{G}^T) 
   &= \text{det}\left(\begin{bmatrix}
\bm{G}(k, :) \\
\bar{\bm{G}_k}
\end{bmatrix}\begin{bmatrix}
\bm{G}(k, :) \\
\bar{\bm{G}_k}
\end{bmatrix}^T\right) \\
&= \text{det}\left(\begin{bmatrix}
\bm{G}(k, :) \\
\bar{\bm{G}_k}
\end{bmatrix}\left[\bm{G}(k, :)^T, \bar{\bm{G}_k}^T\right]\right) \\
&= \text{det}\left(\bar{\bm{G}_k}\bar{\bm{G}_k}^T\right)\text{det}\left(\bm{G}(k, :)(\bm{I} - \bar{\bm{G}_k}^T(\bar{\bm{G}_k}\bar{\bm{G}_k}^T)^{-1}\bar{\bm{G}_k})\bm{G}(k, :)^T\right)\\
&= \text{det}\left(\bar{\bm{G}_k}\bar{\bm{G}_k}^T\right)\left(\bm{G}(k, :)(\bm{I} - \bar{\bm{G}_k}^T(\bar{\bm{G}_k}\bar{\bm{G}_k}^T)^{-1}\bar{\bm{G}_k})\bm{G}(k, :)^T\right).
\end{split}
\end{equation} 
The last equality holds because $\left(\bm{G}(k, :)(\bm{I} - \bar{\bm{G}_k}^T(\bar{\bm{G}_k}\bar{\bm{G}_k}^T)^{-1}\bar{\bm{G}_k})\bm{G}(k, :)^T\right)$ is a matrix of size $1\times 1$. We have
\begin{equation}\label{E: detpart}
\bm{G}(k, :)(\bm{I} - \bar{\bm{G}_k}^T(\bar{\bm{G}_k}\bar{\bm{G}_k}^T)^{-1}\bar{\bm{G}_k})\bm{G}(k, :)^T 
   = \bm{G}(k, :)\bm{C}_k\bm{C}_k^T\bm{G}(k, :)^T
\end{equation} 
where $\bm{C}_k = \text{Null}(\bar{\bm{G}_k})$ is an orthonormal basis for the null space of $\bar{\bm{G}_k}$. The reason can be seen as follows. From $\bm{C}_k = \text{Null}(\bar{\bm{G}_k})$, we have
$\bm{C}_k^T\bm{C}_k = \bm{I}$ and $\bar{\bm{G}_k}\bm{C}_k = \bm{0}$. The column of $\bm{C}_k$ and the rows of $\bar{\bm{G}_k}$ together form a base of the $J$-dimensional space. Therefore, any vector $\bm{G}(k, :)^T$ can be expressed as $\bm{G}(k, :)^T = \bm{C}_k\bm{x} + \bar{\bm{G}_k}^T\bm{y}$. Then we have 
\begin{equation*}
    \bm{G}(k, :)\bm{G}(k, :)^T = \bm{x}^T\bm{x} + \bm{y}^T\bar{\bm{G}_k}\bar{\bm{G}_k}^T\bm{y}.
\end{equation*}
Note that $\bar{\bm{G}_k}\bm{G}(k, :)^T = \bar{\bm{G}_k}\bar{\bm{G}_k}^T\bm{y}$. Therefore, $\bm{G}(k, :)\bar{\bm{G}_k}^T(\bar{\bm{G}_k}\bar{\bm{G}_k}^T)^{-1}\bar{\bm{G}_k}\bm{G}(k, :)^T = \bm{y}^T\bar{\bm{G}_k}\bar{\bm{G}_k}^T\bm{y}$.
Note that $\bm{C}_k^T\bm{G}(k, :)^T = \bm{x}$. Hence, $\bm{x}^T\bm{x} = \bm{G}(k, :)\bm{C}_k\bm{C}_k^T\bm{G}(k, :)^T$. Therefore, 
\begin{equation*}
    \bm{G}(k, :)\bm{G}(k, :)^T - \bm{G}(k, :)\bar{\bm{G}_k}^T(\bar{\bm{G}_k}\bar{\bm{G}_k}^T)^{-1}\bar{\bm{G}_k}\bm{G}(k, :)^T = \bm{x}^T\bm{x} = \bm{G}(k, :)\bm{C}_k\bm{C}_k^T\bm{G}(k, :)^T.
\end{equation*}
The proof for Equation~(\ref{E: detpart}) is completed. Combined Equation~(\ref{E: detfull}) with Equation~(\ref{E: detpart}), volume term $\text{det}(\bm{G}\bm{G}^T)$ can be expressed as
\begin{equation}\label{E: det}
    \text{det}(\bm{G}\bm{G}^T) = \text{det}(\bar{\bm{G}_k} \bar{\bm{G}_k}^T)\bm{G}(k, :)\bm{C}_k\bm{C}_k^T\bm{G}(k, :)^T
\end{equation}

Combined Equation~(\ref{E: gdatafitting}) with Equation~(\ref{E: det}), the problem of Equation~(\ref{Eq: gl1normalter}) is decomposed into
$K$ independent sub-problems which are quadratic programming problems: 
\begin{equation}\label{eq: eachg}
    \begin{split}
      \text{min}_{\bm{G}(k, :)}~~~& \bm{G}(k, :)\bm{Q}_k\bm{G}(k, :)^T - \bm{W}(:, k)^T\bm{P}_k\bm{G}(k, :)^T
        \\
        \text{subject to }~~~& \bm{G}(k, :)\bm{1} = \bm{1}, \bm{G}(k, :) \geq 0
    \end{split}
\end{equation}
\noindent where $\bm{Q}_k = \frac{1}{2}\bm{W}(:, k)^T\bm{W}(:, k)\bm{I} - \frac{\lambda}{2} \text{det}(\bar{\bm{G}_k}\bar{\bm{G}_k}^T)\bm{C}_k\bm{C}_k^T$ and $\bm{P}_k = \bm{P}  - \sum_{l \neq k}\bm{W}(:, l)\bm{G}(l, :)$. In practice, the subproblems are solved alternatively where $\bm{Q}_k$ and $\bm{P}_k$ use the latest rows of $\bm{G}$. The strict convexity of Equation~(\ref{eq: eachg}) requires $\bm{Q}_k$ to be positive definite. Consequently, the value of $\lambda$ cannot be too large. Again, following \citet{Leplat2019rank, GillisSNPA2014}, we use PFGM on Equation~(\ref{eq: eachg}) \citep{nesterov2004introductory}.

We call this algorithm APFGM, short for alternative projected fast gradient methods. See algorithm \ref{alg: APFGM} for a conclusion, where $\lambda$ in (\ref{E: emamodnoise}) is computed by scaling the input $\lambda'$: $\lambda = \lambda'\frac{||\bm{P} - \bm{W}^{(0)}\bm{G}^{(0)}||_F^2}{\text{det}(\bm{G}^{(0)}(\bm{G}^{(0)})^T)}$, with $\bm{W}^{(0)}$ and $\bm{G}^{(0)}$ being the initial input matrices.

Note that, in APFGM, $\lambda$ can be negative, where APFGM tends to obtain basis vectors as closely as possible, which follows the minimum volume assumption \citep{Zhou2012Minimum}.

\begin{algorithm}[h]
\caption{Alternative projected fast gradient methods (APFGM)}
\label{alg: APFGM}
\KwIn{Input nonnegative matrix $P \in \mathbb{R}_{+}^{I \times J}$ with row-sum-to-one constraint $\bm{P1} = \bm{1}$, dimensionality $K$, number of iterations maxiter, and $\lambda'$.}
\KwOut{$\bm{W} \ge 0$ with $\bm{W1} = \bm{1}$ and $\bm{G} \ge 0$ with $\bm{G1} = \bm{1}$.}

Generate initial matrices $\bm{W}^{(0)} \ge 0$  with $\bm{W}^{(0)}\bm{1} = \bm{1}$ and $\bm{G}^{(0)} \ge 0$ with $\bm{G}^{(0)}\bm{1} = \bm{1}$. Let $\lambda = \lambda'\frac{||\bm{P} - \bm{W}^{(0)}\bm{G}^{(0)}||_F^2}{\text{det}(\bm{G}^{(0)}(\bm{G}^{(0)})^T)}$.

\For{$t = 1, 2, \ldots, $\text{maxiter}}{
  Apply FPGD to the quadratic programming problem Equation~(\ref{eq: eachw}) to update abundance matrix $\bm{W}$
  
  Apply FPGD to the quadratic programming problem Equation~(\ref{eq: eachg}) to update end-member matrix $\bm{G}$
}
\end{algorithm}

\subsection{Applying APFGM to the  simulated GSD data  in the Introduction}
We perform APFGM on the same simulated GSD data as in the Introduction. When $\lambda < 0$, i.e., minimum volume constraint, the results in Figures~\ref{F: QPtwosourceminendmem} and \ref{F: QPtwosourceminabundances} are similar to Figures~\ref{F: EMMAtwosourcemaxendmem} and ~\ref{F: EMMAtwosourcemaxabundances} from EMMA, and cannot obtain the true EMs and misestimate abundances. We find the same results in Figures~\ref{F: QPtwosourcenoendmem} and \ref{F: QPtwosourcenoabundances} for $\lambda = 0$, i.e., no volume constraint. However, when $\lambda > 0$, i.e., using the maximum volume assumption, APFGM recovers the true EMs and abundances; see Figures~\ref{F: QPtwosourcemaxendmem} and \ref{F: QPtwosourcemaxabundances}. The determinant of $\bm{G}\bm{G}^T$ for APFGM with minimum volume, APFGM with no volume, and EMMA is around 0.0015, but for APFGM with maximum volume is around 0.0028. APFGM with maximum volume tends to find end members as different as possible which makes it suitable for this highly mixed GSD data.

\begin{figure}[h]
\centering
\begin{subfigure}[b]{0.32\linewidth}
\includegraphics[width=1\textwidth]{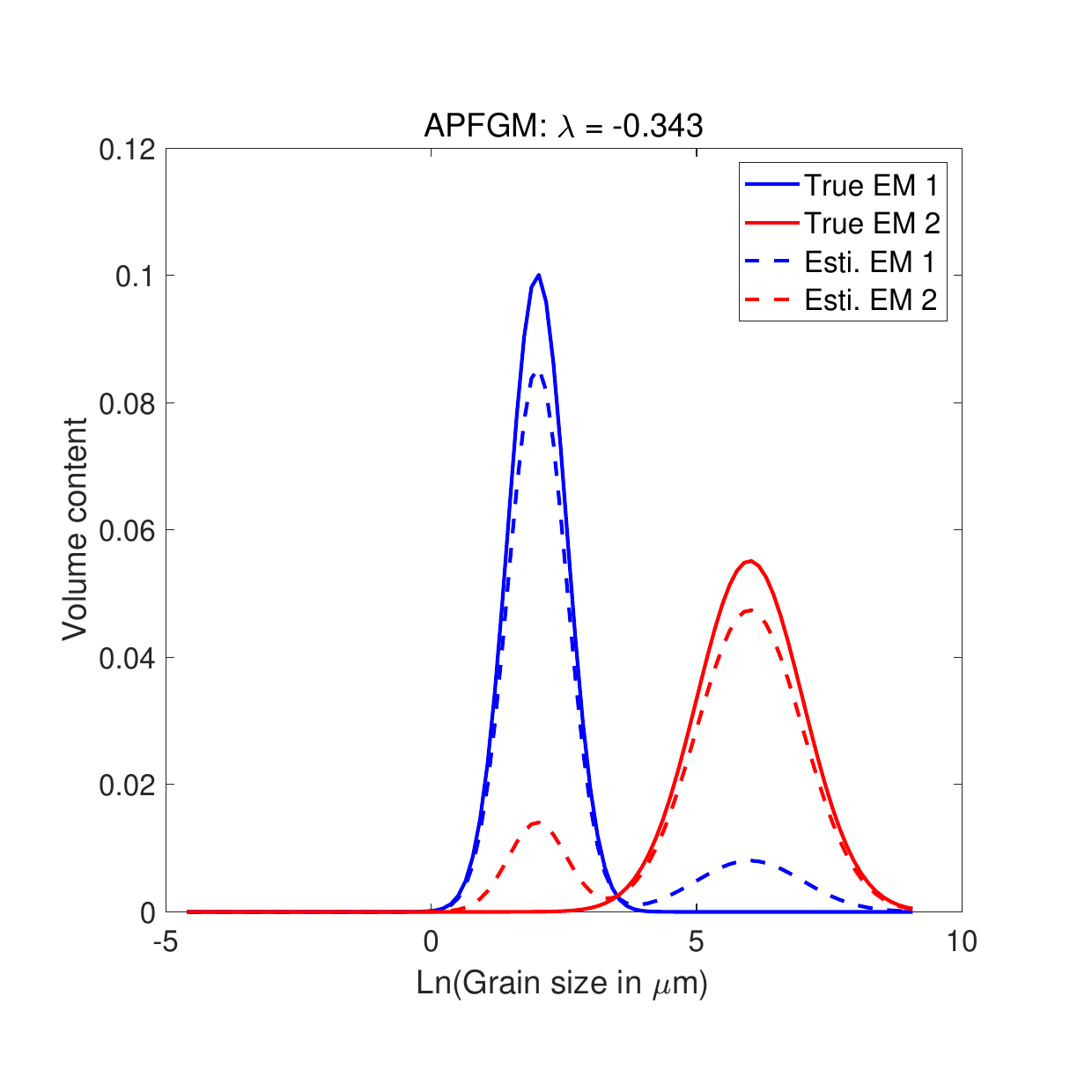}
    \caption{$\lambda = -0.343$: End Members}
    \label{F: QPtwosourceminendmem}
 \end{subfigure}
    \begin{subfigure}[b]{0.32\linewidth}
\includegraphics[width=1\textwidth]{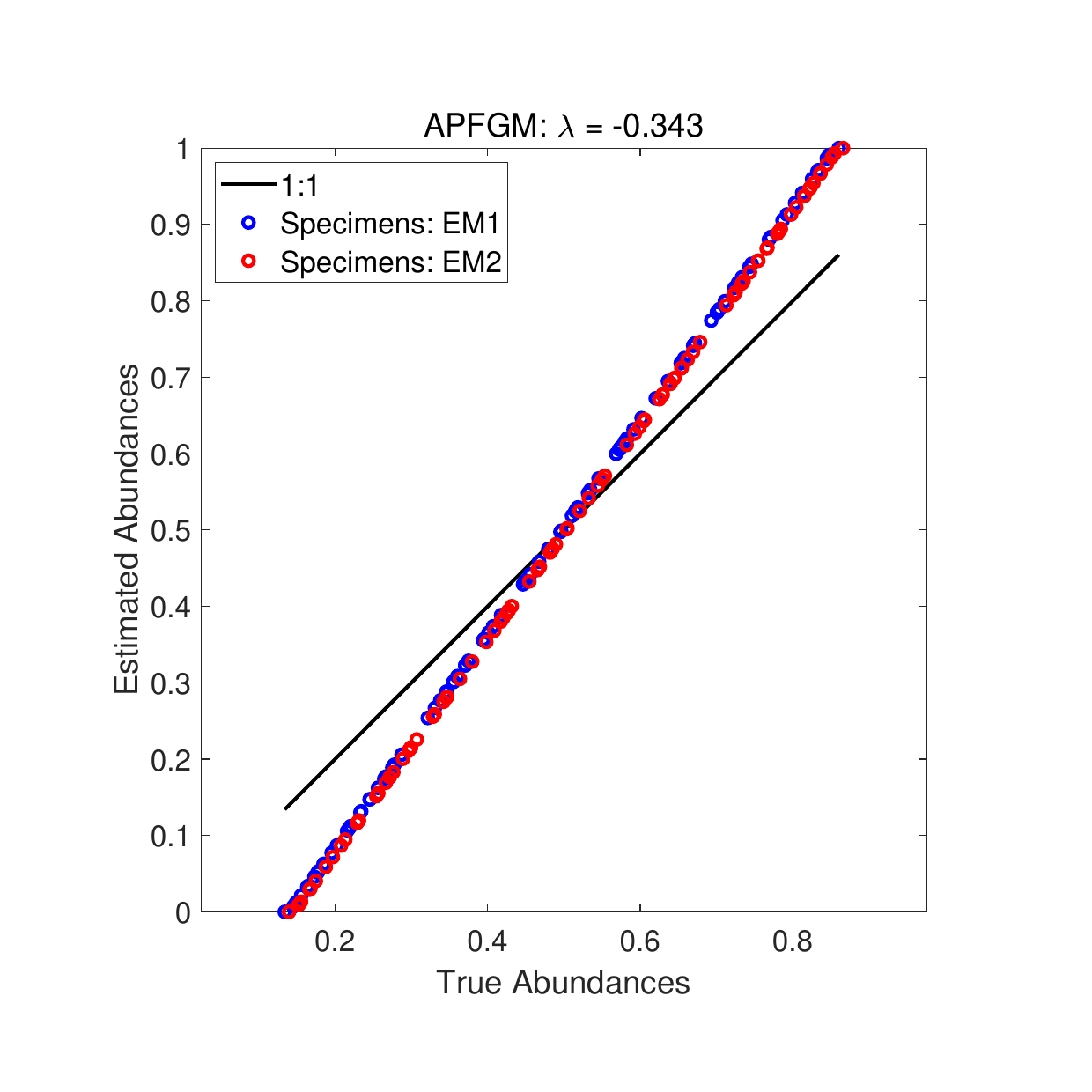}
    \caption{$\lambda = -0.343$: Abundances}
    \label{F: QPtwosourceminabundances}
   \end{subfigure} 
 \begin{subfigure}[b]{0.32\linewidth}
\includegraphics[width=1\textwidth]{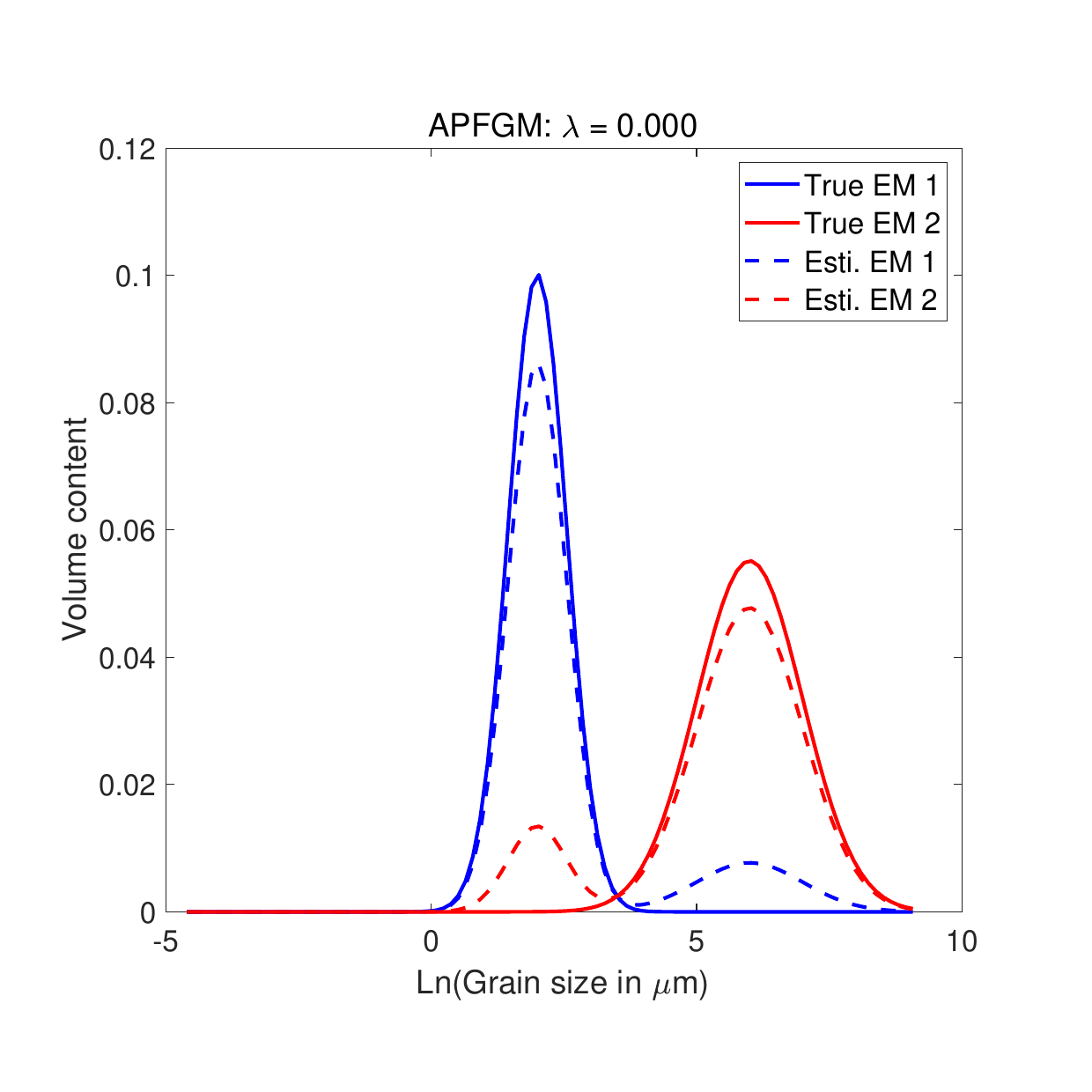}
    \caption{$\lambda = 0$: End Members}
    \label{F: QPtwosourcenoendmem}
 \end{subfigure}
    \begin{subfigure}[b]{0.32\linewidth}
\includegraphics[width=1\textwidth]{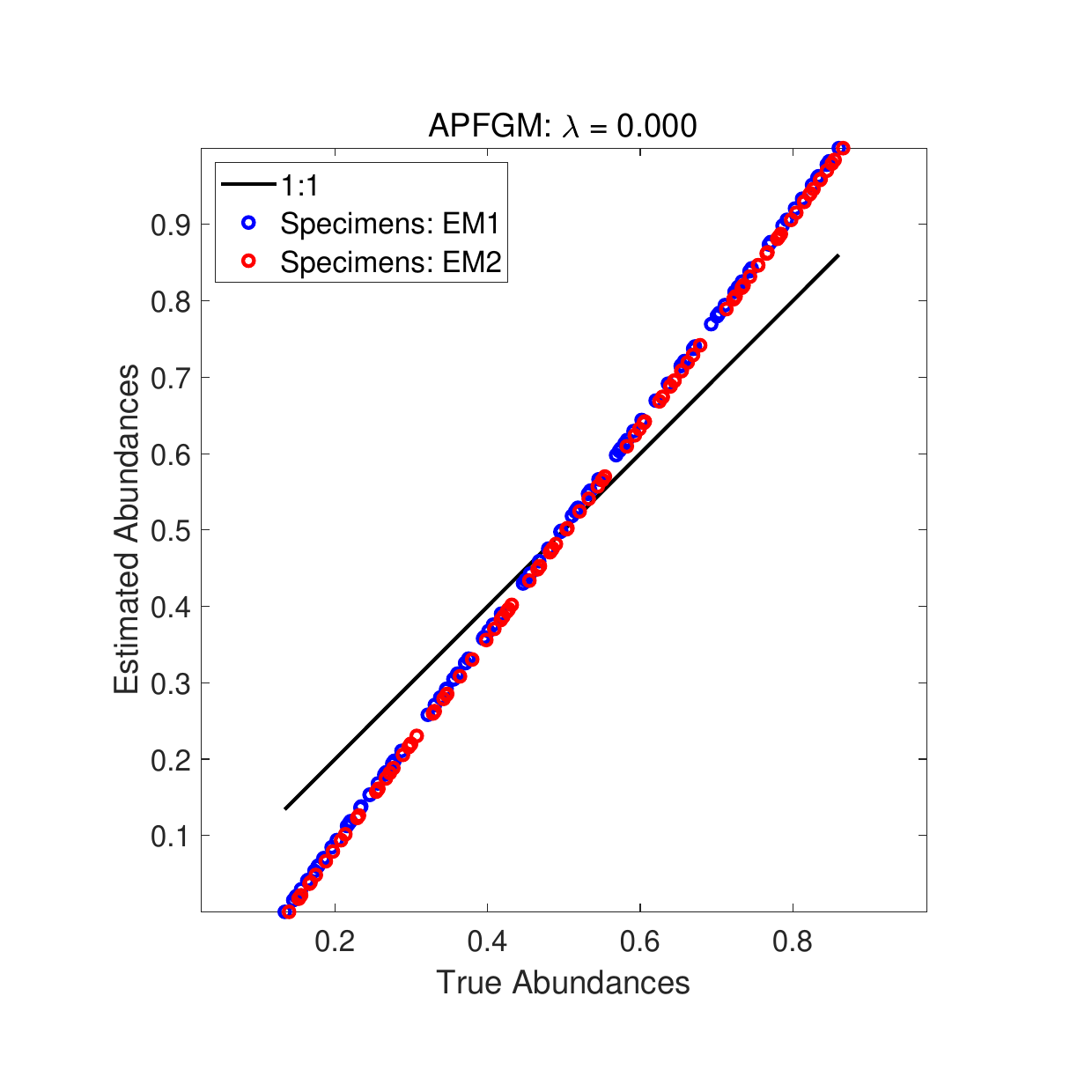}
    \caption{$\lambda = 0$: Abundances}
    \label{F: QPtwosourcenoabundances}
 \end{subfigure}
 \begin{subfigure}[b]{0.32\linewidth}
\includegraphics[width=1\textwidth]{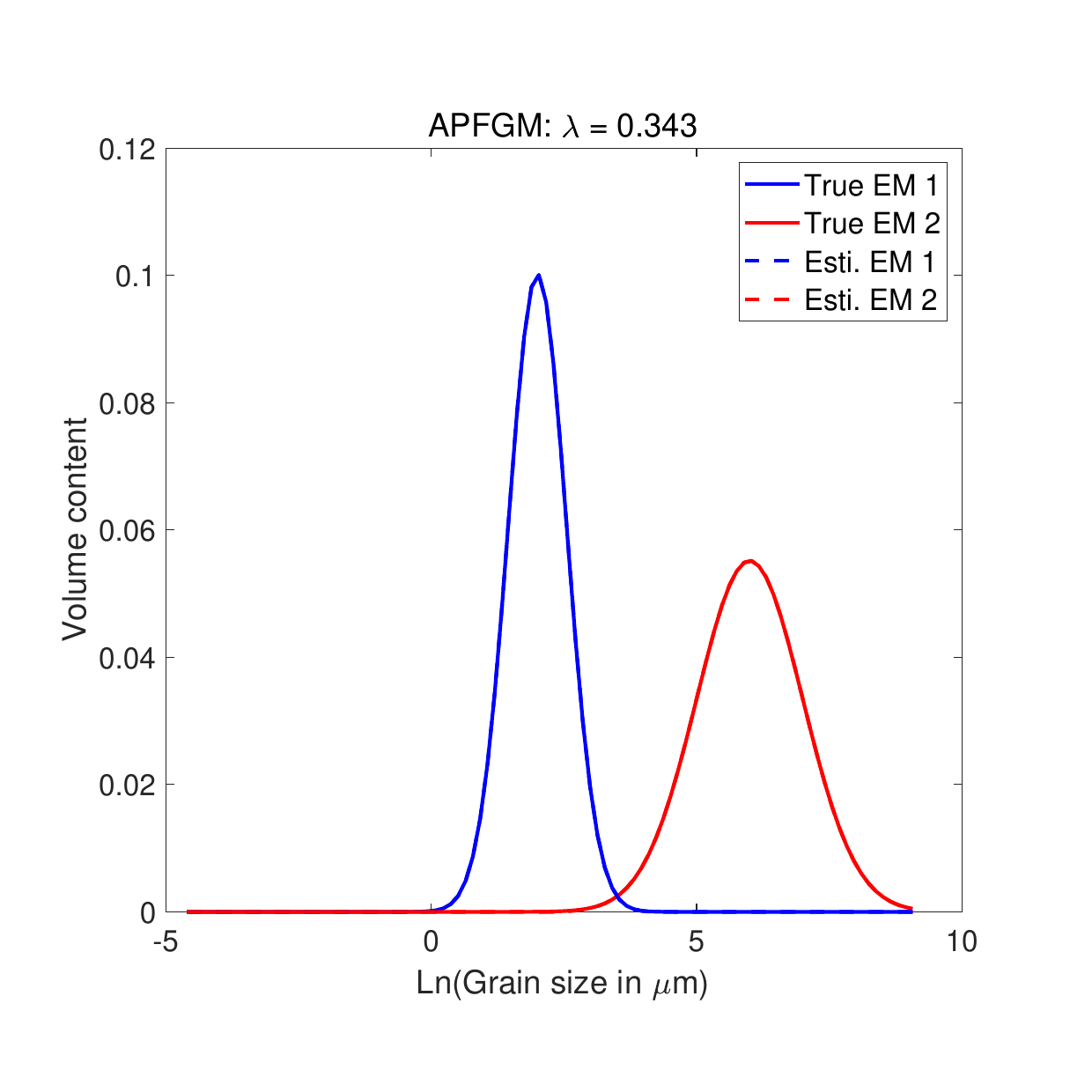}
    \caption{$\lambda = 0.343$: End Members}
    \label{F: QPtwosourcemaxendmem}
 \end{subfigure}
    \begin{subfigure}[b]{0.32\linewidth}
\includegraphics[width=1\textwidth]{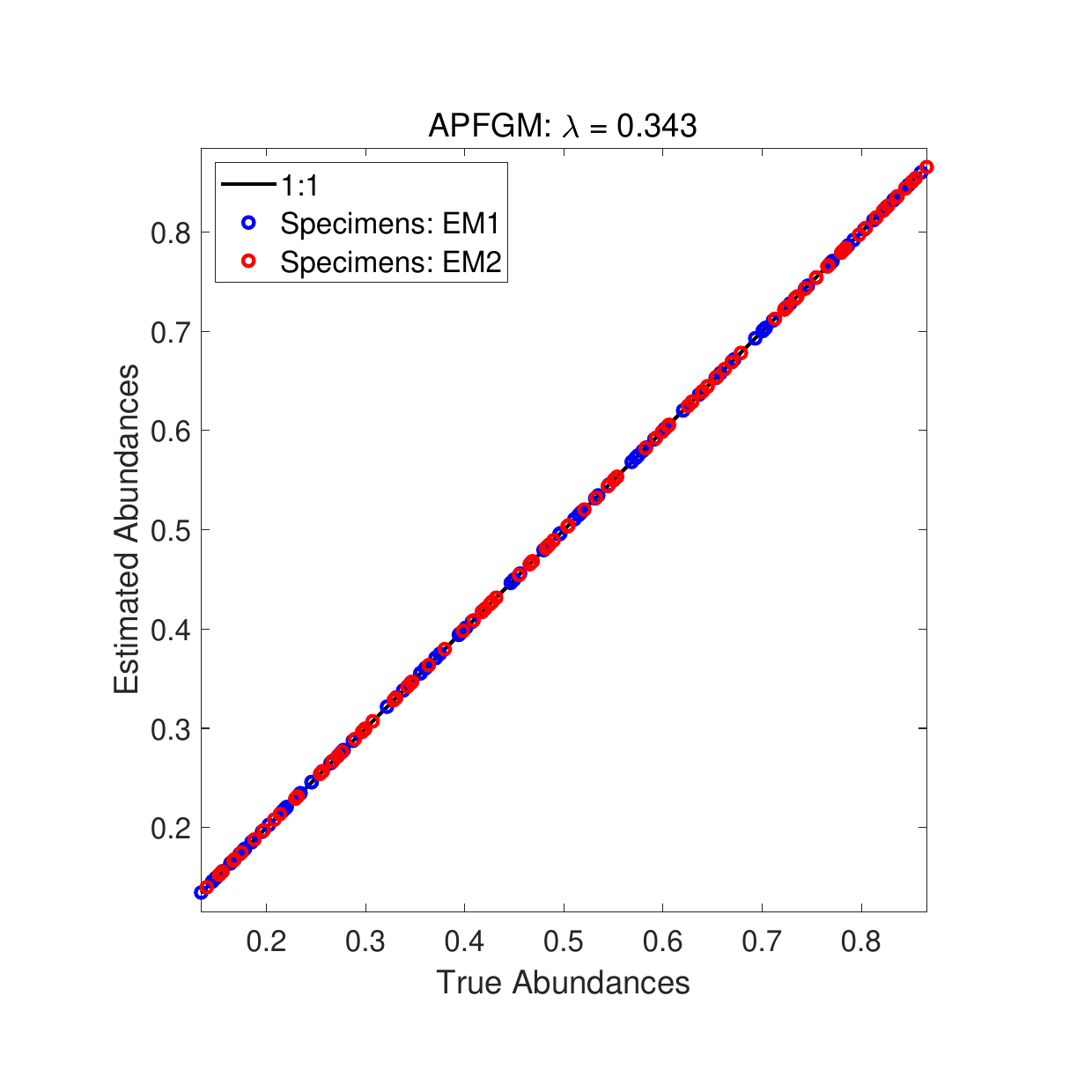}
    \caption{$\lambda = 0.343$: Abundances}
    \label{F: QPtwosourcemaxabundances}
 \end{subfigure}
 \caption{APFGM:
(a and b) $\lambda = -0.343$; (c and d) $\lambda = 0$; (e and f) $\lambda = 0.343$.}\label{F: maximumnoisycoversanddataset}
    \end{figure}

\section{Experiments}\label{S: experiments}

In this section, we simulate different levels of mixing GSD data to extensively explore the performance of APFGM. The code for this paper is available on the Github website https://github.com/qianqianqi28/MVC-EMA, implemented in MATLAB R2024b and R 4.2.3.

\subsection{Generation of artificial GSD data}

\begin{figure}[h]
\centering
\includegraphics[width=0.55\textwidth]{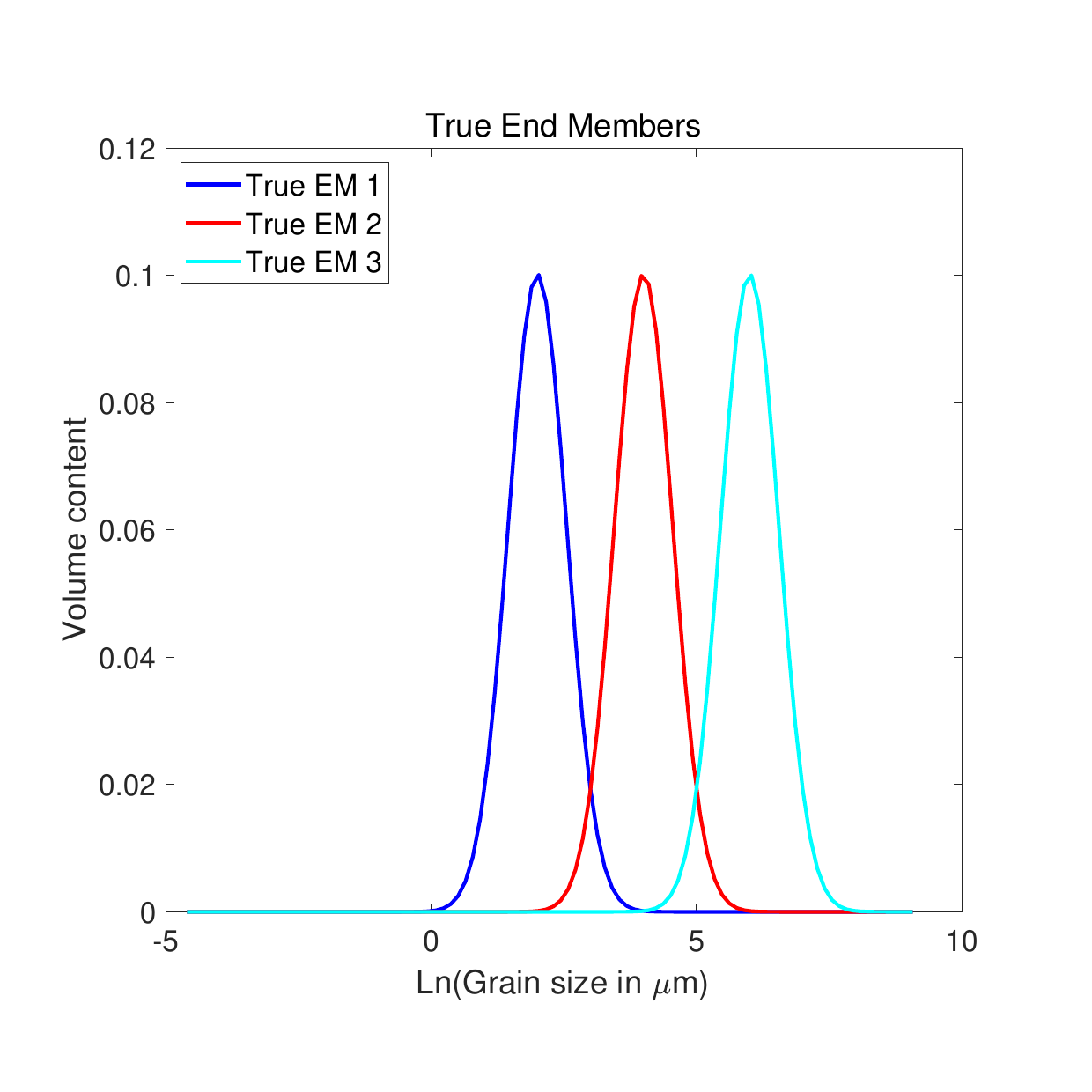}
\caption{Three lognormal EMs.}\label{F: True End Members three sources}
\end{figure}

Three end members (EMs), simulated by lognormal distributions, are presented in Figure~\ref{F: True End Members three sources} where each EMs have 100 values. Six artificial GSD data are then produced by combining these EMs with random abundances where each GDS data have 200 specimen, and the minimum abundance for each specimen was varied from 0 to 0.25 in increments of 0.05 to simulate conditions ranging from poorly mixed to highly mixed, respectively.

\subsection{Indexes used to evaluate the unmixing results}

We use mean angular deviations (in degrees) between true EMs and estimated EMs to evaluate estimated GSDs:

\begin{equation}
   \text{MAEM} = \frac{180}{\pi K} \sum_{k = 1}^K \arccos\left( \frac{\langle \text{True }\bm{G}(k, :), \text{Est. } \bm{G}(k, :) \rangle}{\|\text{True }\bm{G}(k, :)\|\|\text{Est. }\bm{G}(k, :)\|} \right) 
\end{equation}
and use mean angular deviations (in degrees) between true abundances and estimated abundances to evaluate estimated abundances:
\begin{equation}
   \text{MAAB} = \frac{180}{\pi I} \sum_{i = 1}^I \arccos\left( \frac{\langle \text{True }\bm{W}(i, :), \text{Est. } \bm{W}(i, :) \rangle}{\|\text{True }\bm{W}(i, :)\|\|\text{Est. }\bm{W}(i, :)\|} \right) 
\end{equation}

The range of MAEM and MAAB is larger or equal to 0 degrees and less or equal to 90 degrees. The smaller MAEM or EAAB is, the better an algorithm is. MAEM (MAAB) with a value of 0 means 
that estimated EMs (abundances) are identical to true EMs (abundances). MAEM (MAAB) with a value of 90 means 
that estimated EMs (abundances) are orthogonal to true EMs (abundances).

\subsection{Comparison of EMMA,
APFGM with minimum volume, APFGM with no volume, and APFGM with maximum volume}

The scaling factor $\lambda'$ controls the volume regularization in APFGM. It is set to $-1$, $0$, $1$ for APFGM with minimum volume, with no volume, and with maximum volume, respectively. The MAEM and MAAB against increasing degree
of mixing are plotted in Figure~\ref{F: End Member Misfit} and Figure~\ref{F: Abundance Misfit} respectively and are shown in Table~\ref{T: End Member Misfit} and Table~\ref{T: Abundance Misfit} respectively. We can see that APFGM with maximum volume is best, then APFGM with no volume, finally, EMMA and APFGM with minimum volume. Specifically, at zero level, APFGM with no volume and APFGM with maximum volume estimate
the true end-members and abundances more accurately than EMMA and APFGM with minimum volume. This means that APFGM algorithm is also suitable for no mixed GSD data. As level of mixing increases, EMMA and APFGM with no volume and minimum volume cannot clearly estimate EMs and abundances, but APFGM with maximum volume
fits the true EMs and abundances, with a slight deterioration. This highlights the usefulness of
maximum volume regularizer in highly mixed data sets. 

\begin{figure}[H]
\centering
 \begin{subfigure}[b]{0.45\linewidth}
\includegraphics[width=1\textwidth]{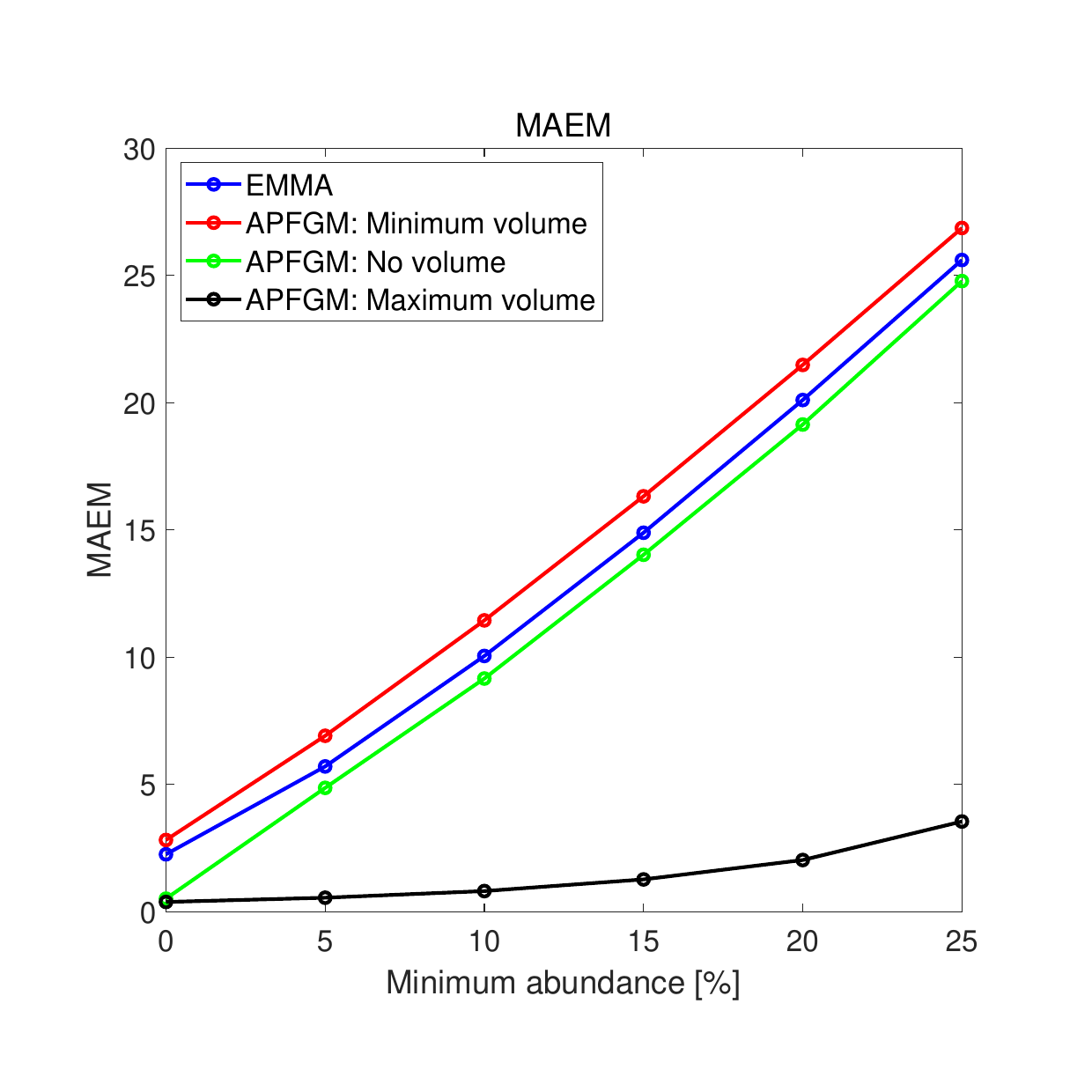}
    \caption{MAEM}
    \label{F: End Member Misfit}
 \end{subfigure}
    \begin{subfigure}[b]{0.45\linewidth}
\includegraphics[width=1\textwidth]{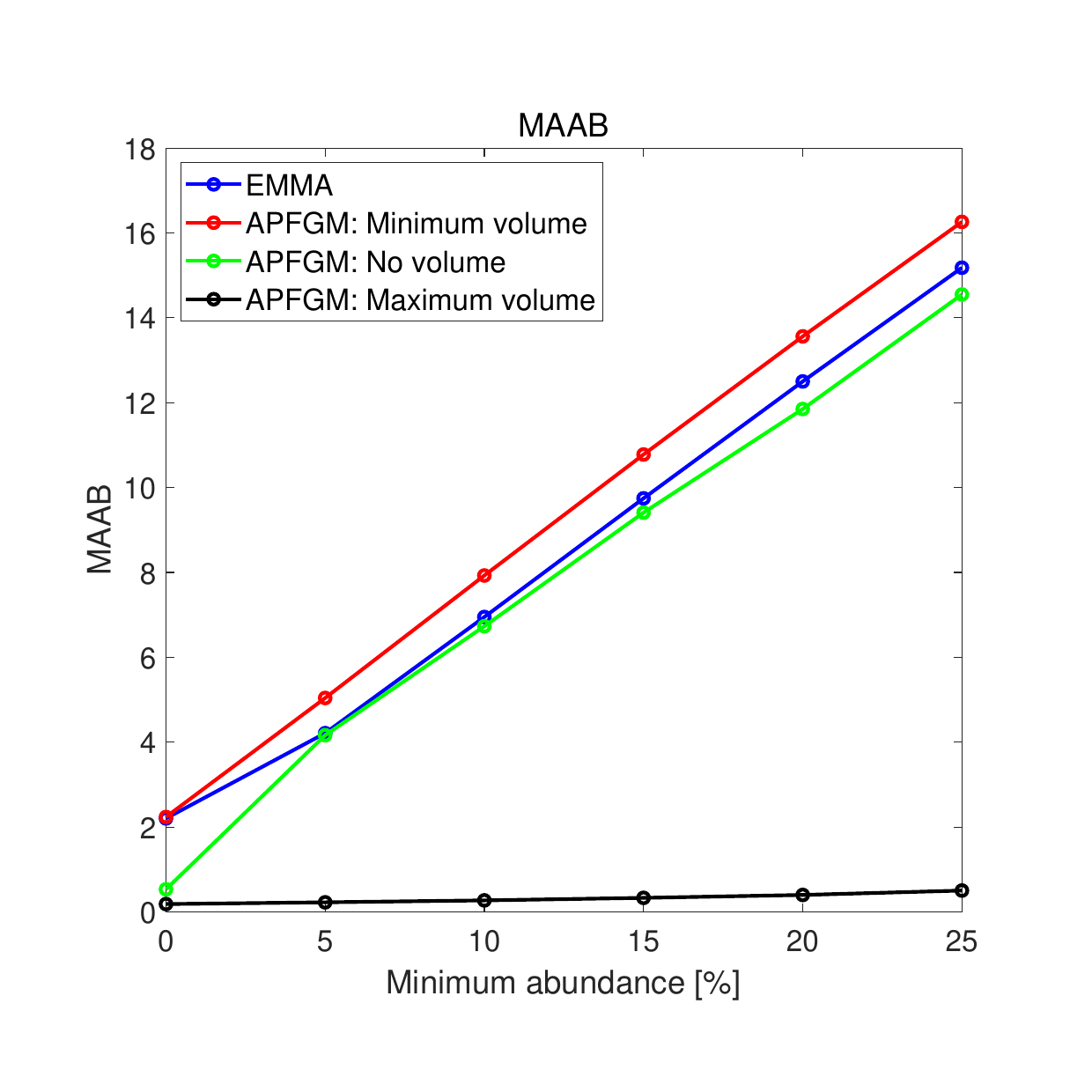}
    \caption{MAAB}
    \label{F: Abundance Misfit}
 \end{subfigure}
 \caption{EMMA, APFGM with minimum volume, APFGM with no volume, and
APFGM with maximum volume: (a) MAEM; (b) MAAB.}\label{F: maemmaab}
    \end{figure}

\begin{table}[H]
    \caption{The MAEM of EMMA, APFGM with minimum volume, APFGM with no volume, and
APFGM with maximum volume under different mixed level.}
    \label{T: End Member Misfit}
    \centering
    \begin{tabular}{ccccc}
    \hline
    Mixed level & EMMA & APFGM min & APFGM no &  APFGM max
    \\
    \hline
 0.00&2.2602 & 2.8223&  0.5175& 0.3883\\
0.05& 5.7117& 6.9162& 4.8707& 0.5551\\
0.10&10.0522&11.4498& 9.1652& 0.8163\\
0.15&14.8928&16.3239&14.0272& 1.2743\\
0.20&20.1057&21.4845&19.1438& 2.0358\\
0.25&25.6058&26.8632&24.7793& 3.5488
 \\
 \hline
    \end{tabular}
\end{table}

\begin{table}[H]
    \caption{The MAAB of EMMA, APFGM with minimum volume, APFGM with no volume, and
APFGM with maximum volume under different mixed level.}
    \label{T: Abundance Misfit}
    \centering
    \begin{tabular}{ccccc}
    \hline
    Mixed level & EMMA & APFGM min & APFGM no &  APFGM max
    \\
    \hline
0.00&2.2020&  2.2359& 0.5307& 0.1834\\
0.05& 4.2121& 5.0408& 4.1633& 0.2248\\
0.10& 6.9495& 7.9277& 6.7321& 0.2700\\
0.15& 9.7462&10.7792& 9.4098& 0.3296\\
0.20&12.5014&13.5635&11.8517& 0.3982\\
0.25&15.1823&16.2625&14.5505& 0.5017
 \\
 \hline
    \end{tabular}
\end{table}

Figure~\ref{F: mixed15} shows the end-members and abundances for a minimum abundance of 15\%. The estimated
end-members from EMMA and APFGM with $\lambda \leq 0$ exhibit contamination with each other of three EMs and leading to mis-estimated abundances, but APFGM with maximum volume better recover the unimodal sources. Thus, APFGM with maximum volume exhibits a resistant for highly mixed GSD data.

\begin{figure}[H]
\centering
\begin{subfigure}[b]{0.32\linewidth}
\includegraphics[width=1\textwidth]{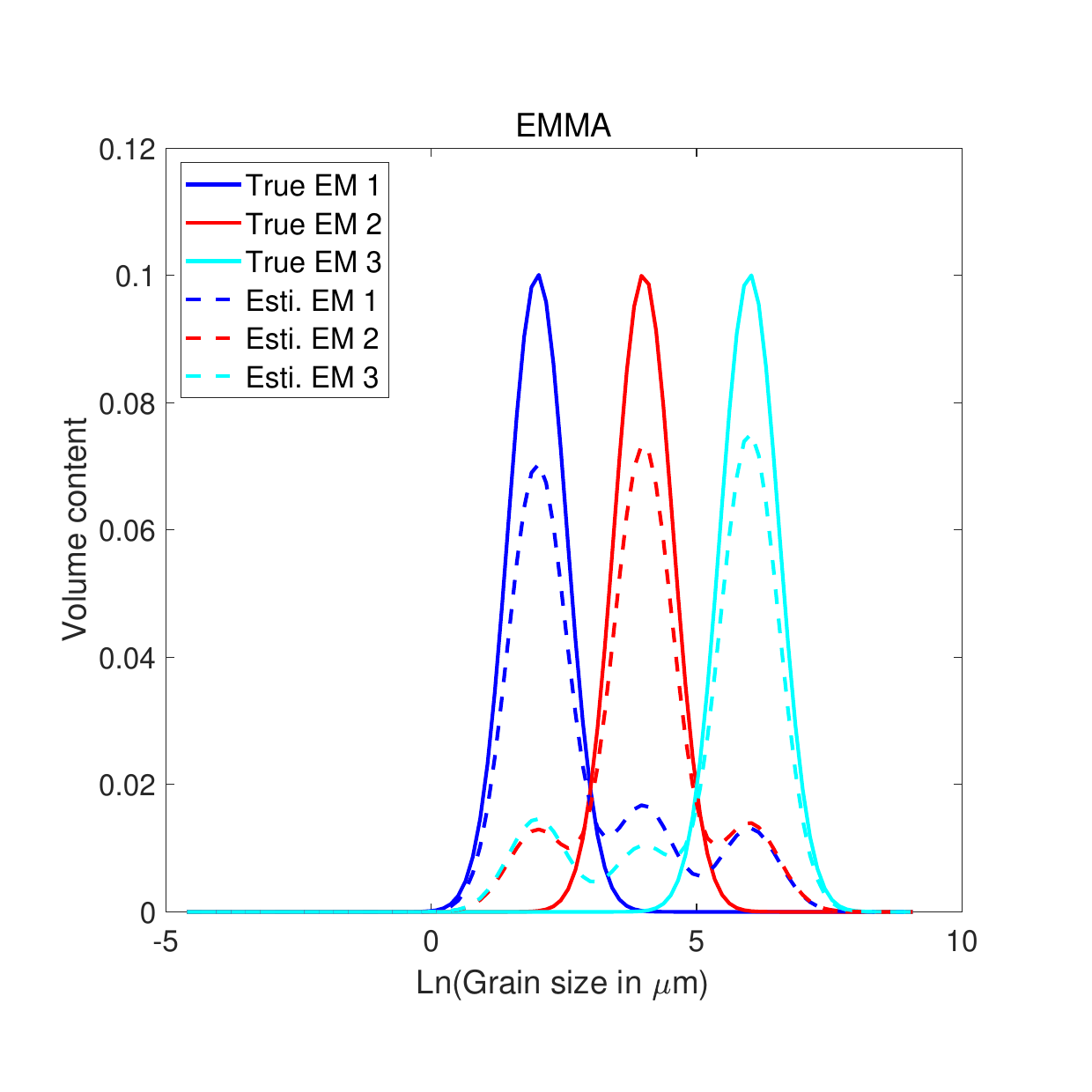}
 \caption{EMMA: EMs}\label{F: EMMAthreesourceendmem15}
 \end{subfigure}
 \begin{subfigure}[b]{0.32\linewidth}
\includegraphics[width=1\textwidth]{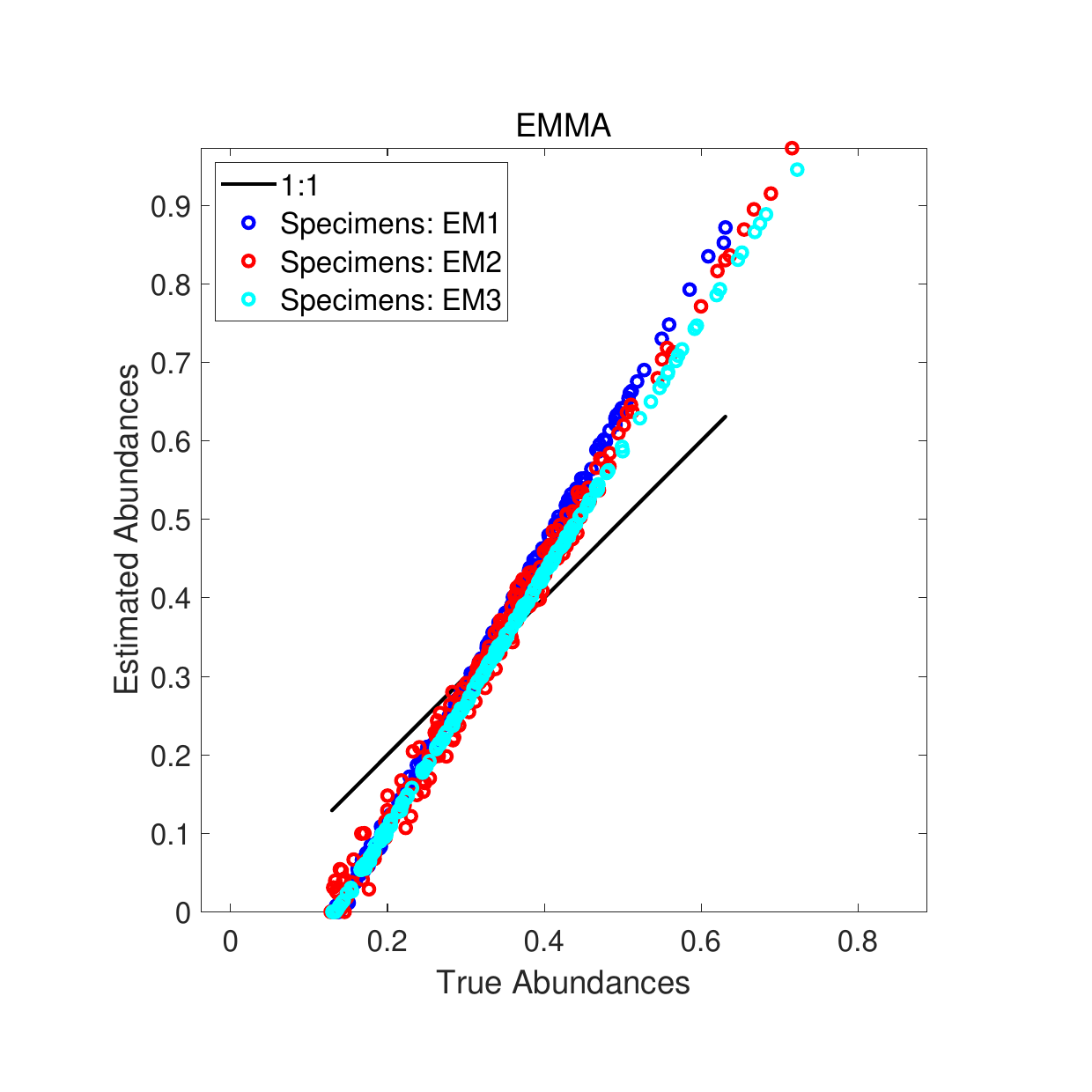}
 \caption{EMMA: Abundances}\label{F: EMMAthreesourceabundances15}
 \end{subfigure}
    \begin{subfigure}[b]{0.32\linewidth}
\includegraphics[width=1\textwidth]{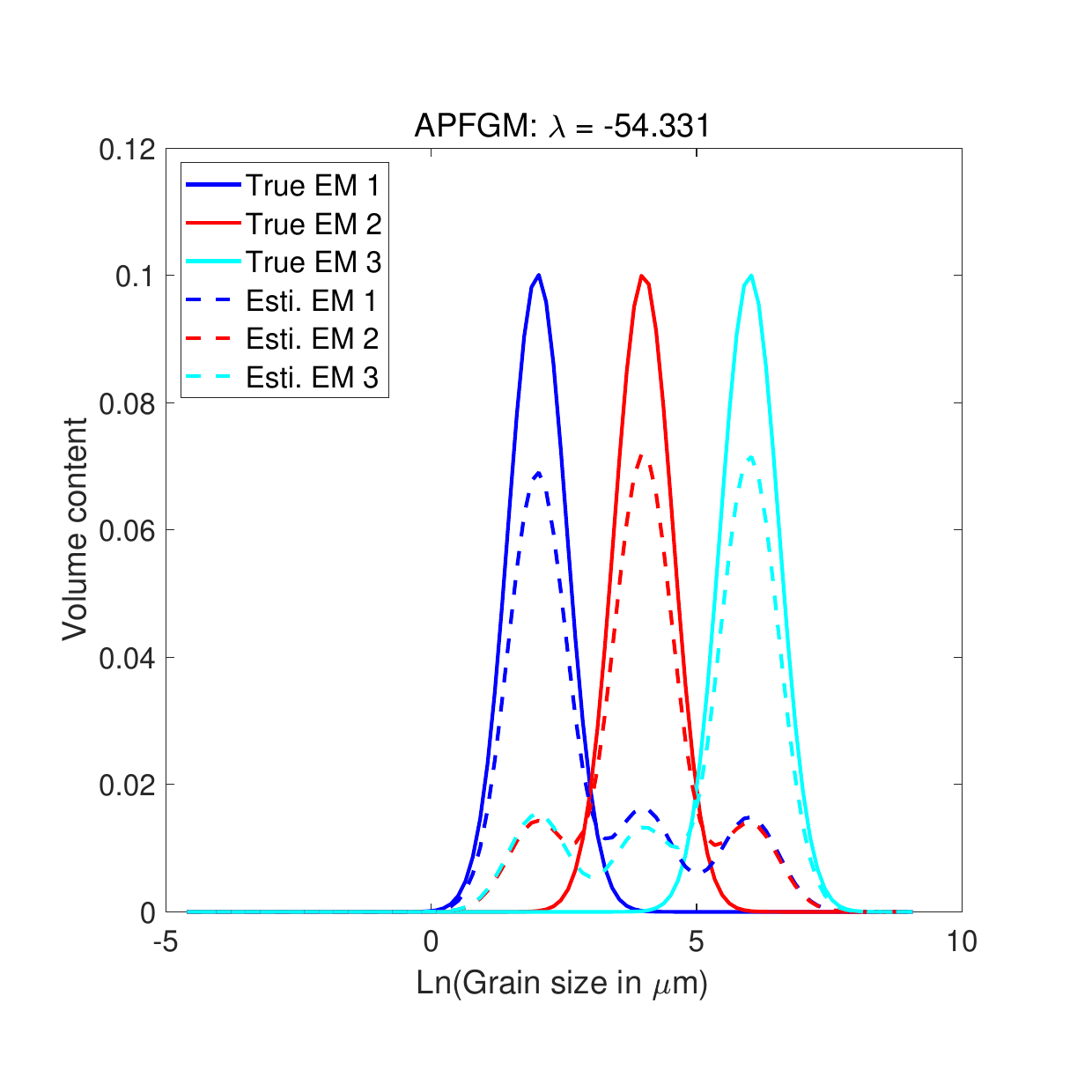}
    \caption{\tiny{PFGM with minimum volume: EMs}}
    \label{F: QPthreesourceminendmem15}
 \end{subfigure}
  \begin{subfigure}[b]{0.32\linewidth}
\includegraphics[width=1\textwidth]{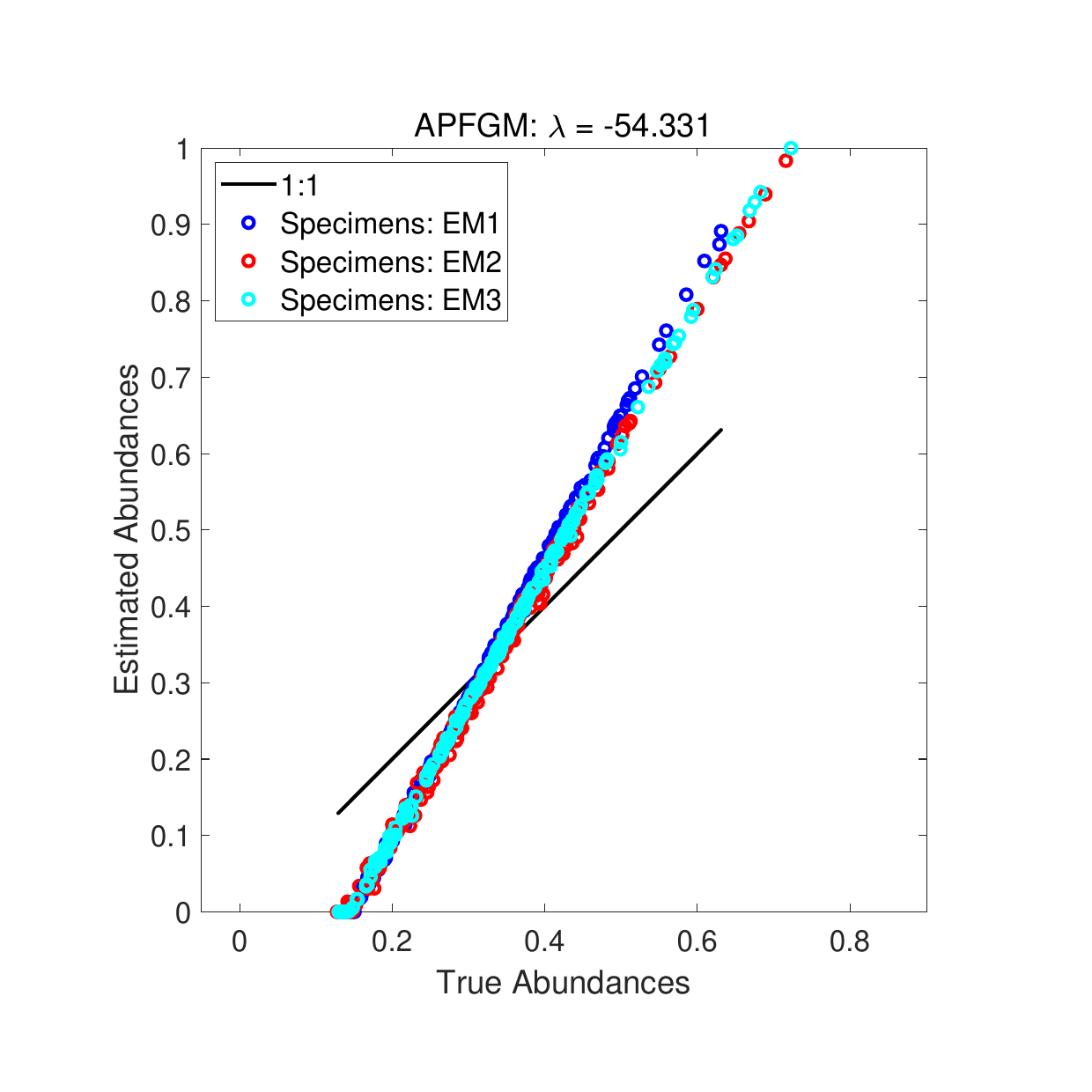}
 \caption{\tiny{PFGM with minimum volume: Abunds}}\label{F: QPthreesourceminabundances15}
 \end{subfigure}
   \begin{subfigure}[b]{0.32\linewidth}
\includegraphics[width=1\textwidth]{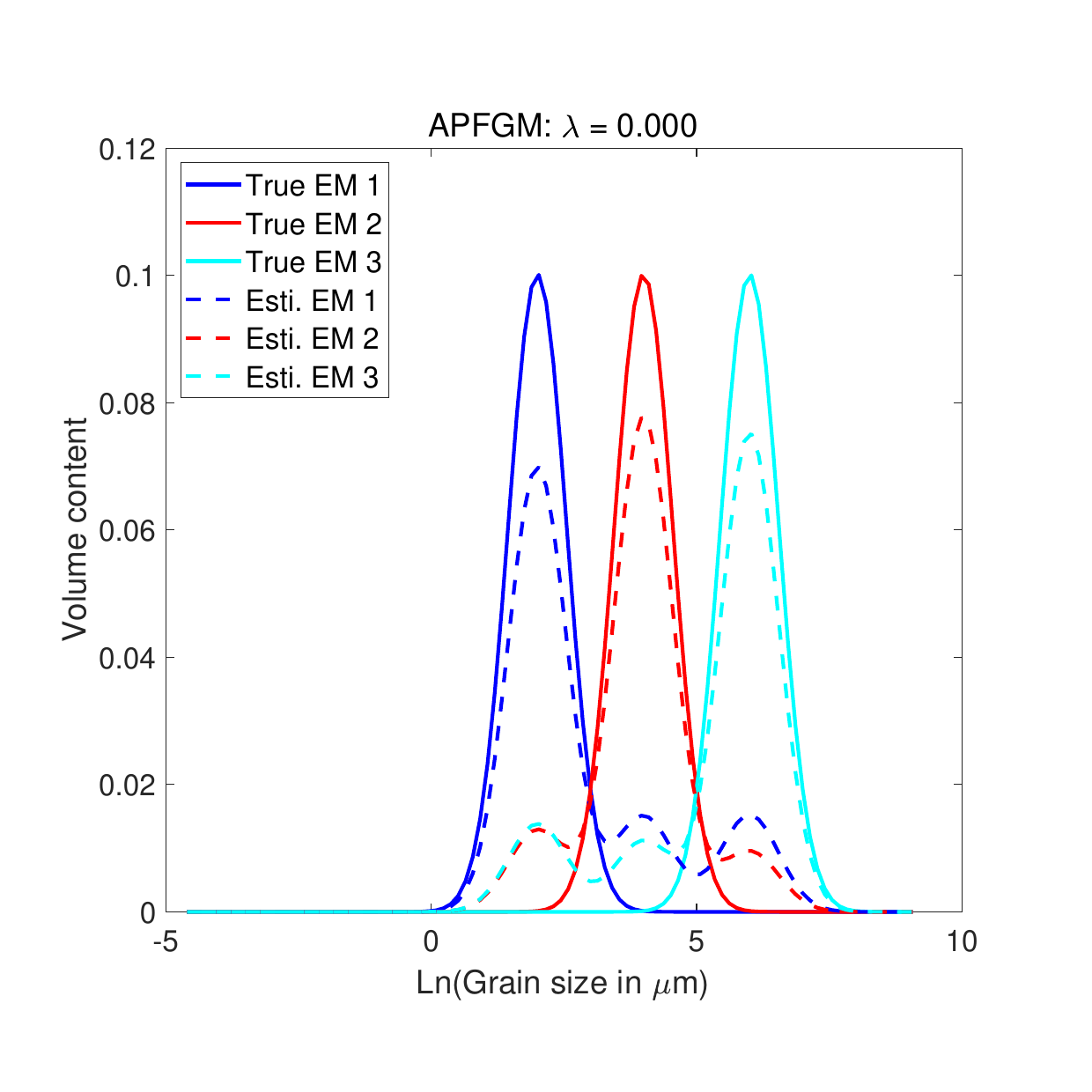}
    \caption{\tiny{PFGM with no volume: EMs}}
    \label{F: QPthreesourcenoendmem15}
 \end{subfigure}
  \begin{subfigure}[b]{0.32\linewidth}
\includegraphics[width=1\textwidth]{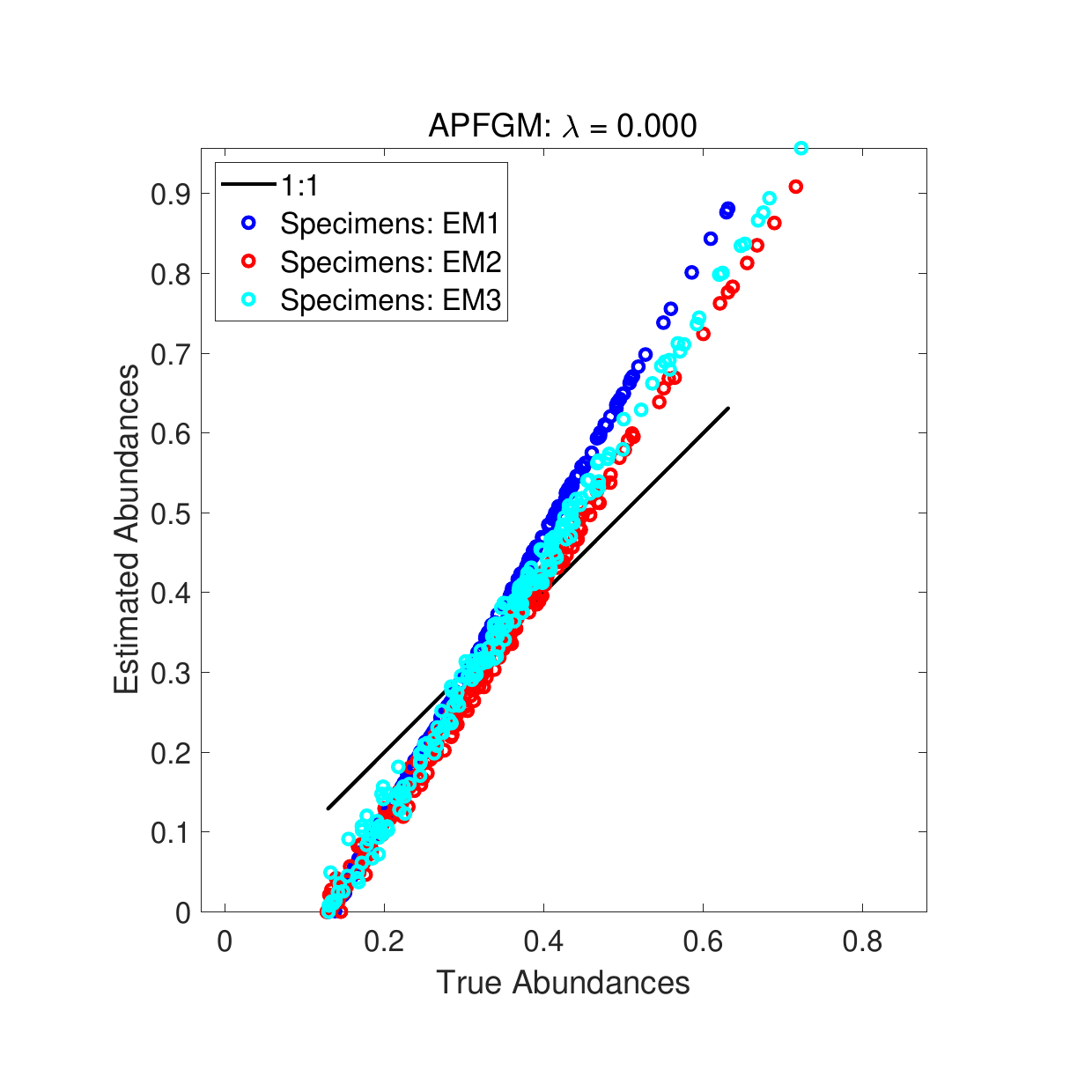}
 \caption{\tiny{PFGM with no volume: Abunds}}\label{F: QPthreesourcenoabundances15}
 \end{subfigure}
    \begin{subfigure}[b]{0.32\linewidth}
\includegraphics[width=1\textwidth]{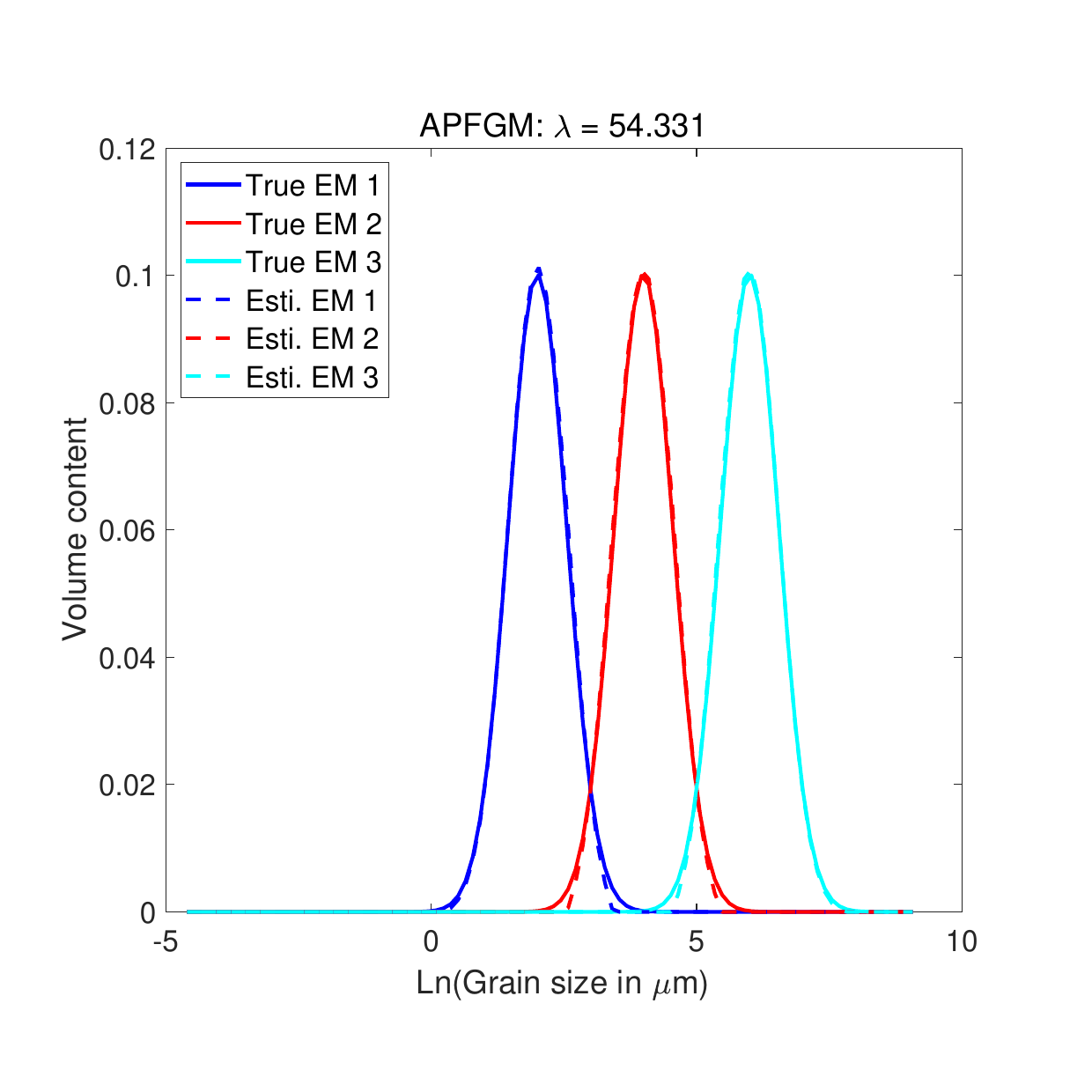}
    \caption{\tiny{PFGM with maximum volume: EMs}}
    \label{F: QPthreesourcemaxendmem15}
 \end{subfigure}
  \begin{subfigure}[b]{0.32\linewidth}
\includegraphics[width=1\textwidth]{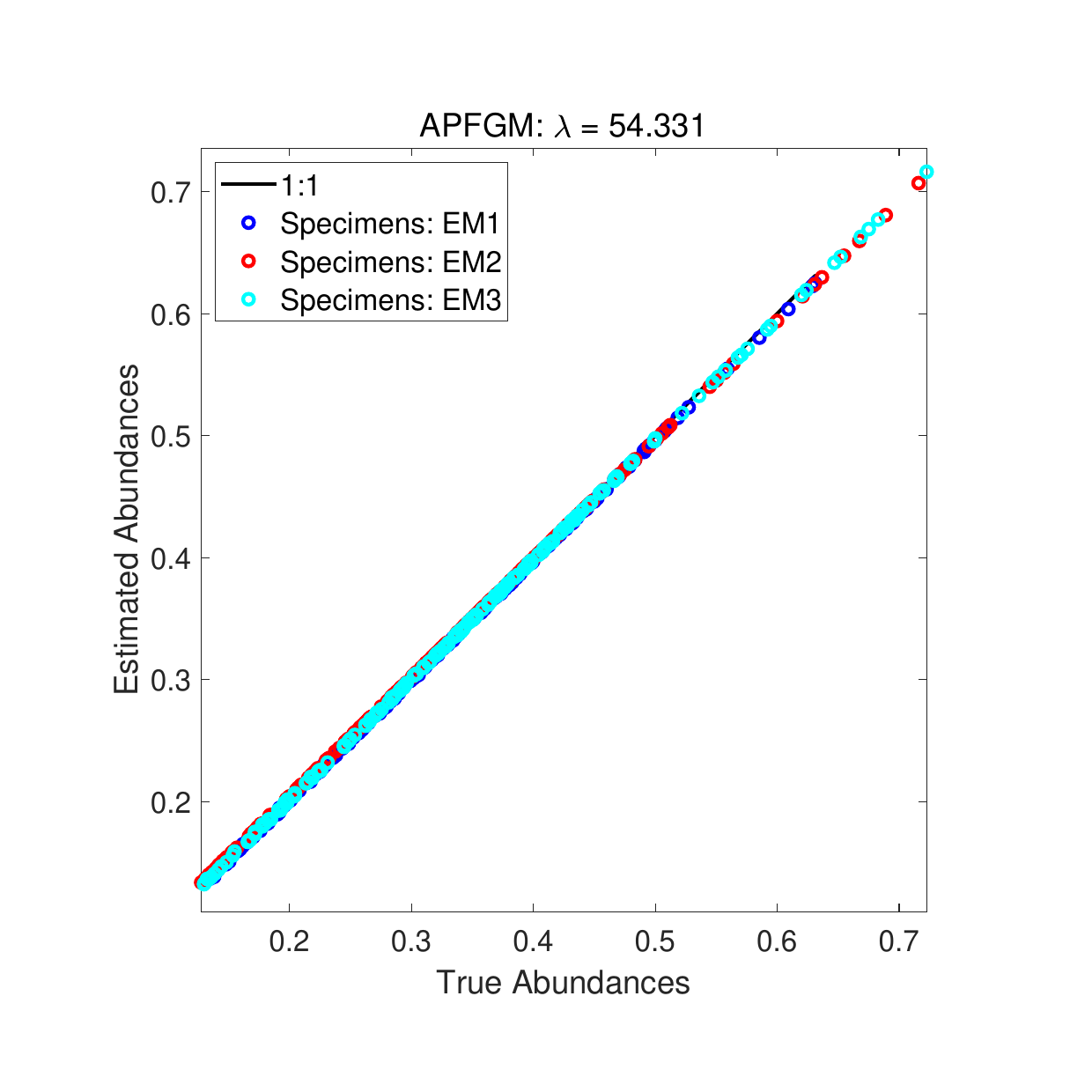}
 \caption{\tiny{PFGM with maximum volume: Abunds}}\label{F: QPthreesourcemaxabundances15}
 \end{subfigure}
 \caption{(a and b) EMMA; (c and d) PFGM with minimum volume; (e and f) PFGM with no volume; (g and h) PFGM with maximum volume.}\label{F: mixed15}
    \end{figure}

The volume $\text{det}(\bm{G}\bm{G}^T)$ against
increasing degree of mixing is shown in Figure~\ref{F: Determinant for Basis Matrix} and Table~\ref{T: Determinant for Basis Matrix}. We can see that APFGM with maximum volume has maximum volume, APFGM with no volume is second, and finally, EMMA and APFGM with minimum volume for any level of mixing. As increasing the level of mixing, the volume $\text{det}(\bm{G}\bm{G}^T)$ increases for APFGM with maximum volume and decreases for EMMA and APFGM with no volume and minimum volume.

\begin{figure}[H]
\centering
\includegraphics[width=0.55\textwidth]{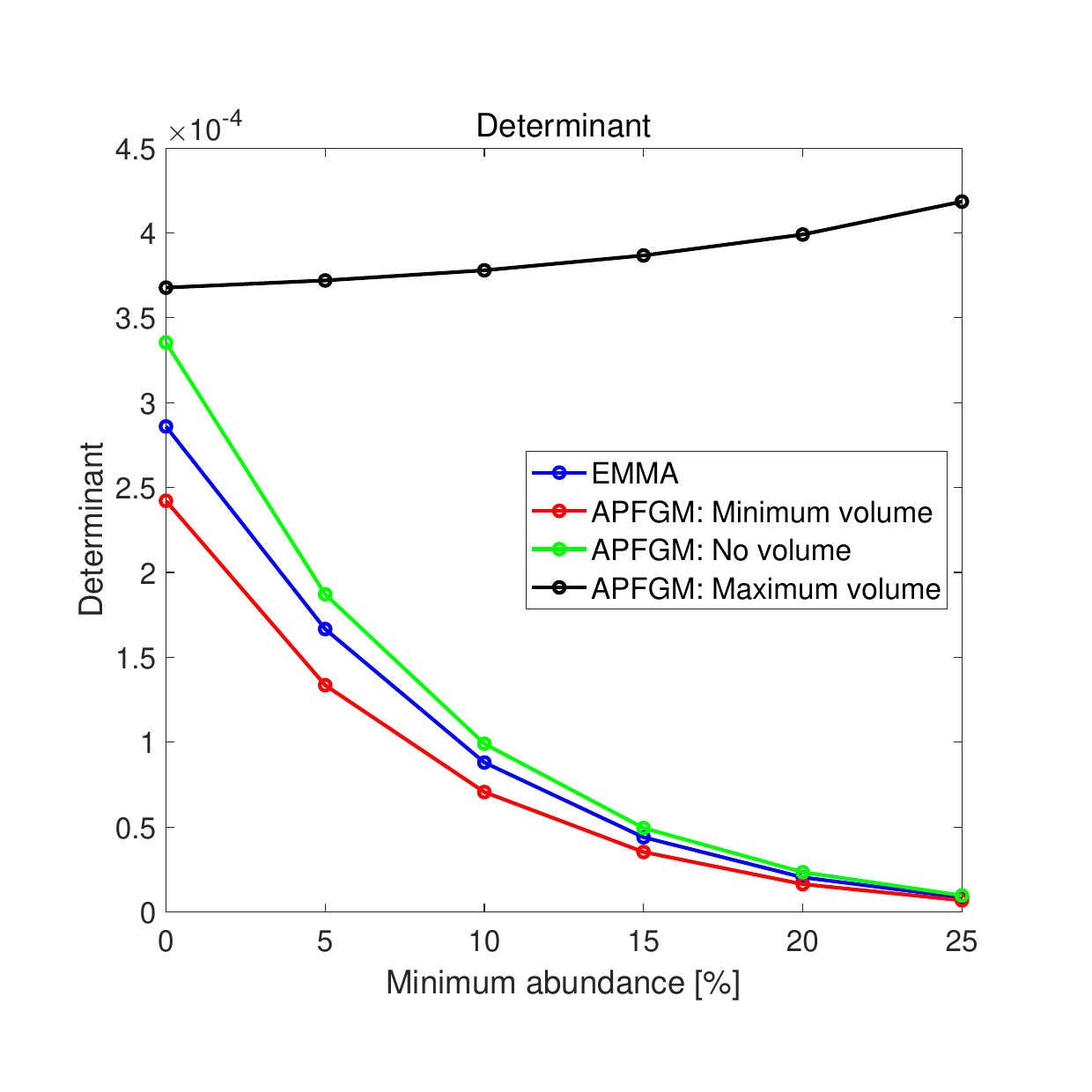}
 \caption{EMMA, APFGM with minimum volume, APFGM with no volume, and
APFGM with maximum volume: $\text{det}(\bm{G}\bm{G}^T)$.}\label{F: Determinant for Basis Matrix}
    \end{figure}

\begin{table}[H]
    \caption{The $\text{det}(\bm{G}\bm{G}^T)$ of EMMA, APFGM with minimum volume, APFGM with no volume, and
APFGM with maximum volume under different mixed level.}
    \label{T: Determinant for Basis Matrix}
    \centering
    \begin{tabular}{ccccc}
    \hline
    Mixed level & EMMA & APFGM min & APFGM no &  APFGM max
    \\
    \hline
     0.00    &   0.2860 &   0.2422  &  0.3355  &  0.3678\\
  0.05   & 0.1666  &  0.1336   & 0.1871  &  0.3721\\
  0.10   & 0.0880  &  0.0706   & 0.0991  &  0.3780\\
 0.15  &  0.0439 &  0.0353  &  0.0494   &0.3868\\
0.20  &  0.0203  &  0.0163  &  0.0234 &   0.3991\\
 0.25   &  0.0085 &   0.0068  &  0.0096  &  0.4185 
 \\
 \hline
    \end{tabular}
\end{table}
    
\section{Conclusion}\label{S: conclusion}

To conclude, in this paper, we propose maximum volume constrained EMA (MVC-EMA) for highly mixed GSD data. We prove that MVC-EMA is unique under the sufficient scattered conditions, a
new NMF identification criterion that is careful tweak of the existing volume minimization criterion in minimum volume constrained NMF. We introduce APFGM which makes use of alternative projected fast gradient methods. Experimental results show that APFGM with maximum volume can effective deal with highly mixed GSD data than APFGM with minimum volume, APFGM with no volume, and EMMA. However, as a new method, there are still some important issues that are in need of further research. First, MVC-EMA
needs to be tested on some real SGD to fully evaluate its performance. Second, algorithms which allows the balancing parameter to be any number need to be investigated.

\section*{Acknowledgments}

Zhongming Chen is partially supported by Natural Science Foundation of Zhejiang Province (No. LY22A010012) and Natural Science Foundation of Xinjiang Uygur Autonomous Region (No. 2024D01A09).

\section*{Competing Interests}

No potential competing interest was reported by the authors.

\bibliography{references.bib}

\end{document}